\theoremstyle{plain}
\newtheorem{theorem}{Theorem}[section]
\newtheorem{lemma}[theorem]{Lemma}
\newtheorem{proposition}[theorem]{Proposition}
\newtheorem{corollary}[theorem]{Corollary}
\theoremstyle{definition}
\newtheorem{definition}[theorem]{Definition}
\newtheorem{remark}[theorem]{Remark}
\newcommand\define[1]{\emph{\textbf{#1}}}
\newcommand{\bd}{\begin{definition}}
\newcommand{\ed}{\end{definition}}
\newcommand{\bt}{\begin{theorem}}
\newcommand{\et}{\end{theorem}}
\newcommand{\bc}{\begin{corollary}}
\newcommand{\ec}{\end{corollary}}
\newcommand{\blem}{\begin{lemma}}
\newcommand{\elem}{\end{lemma}}
\newcommand{\be}{\begin{equation}}
\newcommand{\ee}{\end{equation}}
\newcommand{\bn}{\begin{proposition}}
\newcommand{\en}{\end{proposition}}
\newcommand{\bq}{\begin{question}}
\newcommand{\eq}{\end{question}}
\newcommand{\bprf}{\begin{proof}}
\newcommand{\eprf}{\end{proof}}
\newcommand{\br}{\begin{remark}}
\newcommand{\er}{\end{remark}}
\let\C=\Chi
\newcommand{\matr}{\mathbb{M}}
\def\R{{{\mathbb R}}}
\def\C{{{\mathbb C}}}
\newcommand{\id}{\mathrm{id}}
\newcommand{\Tr}{{\rm Tr} }
\newcommand{\<}{\langle}
\renewcommand{\>}{\rangle}
\newcommand{\Ad}{\mathrm{Ad}}
\newcommand{\Choi}{\mathscr{C}}
\newcommand{\Jamio}{\mathscr{J}}
\newcommand{\SWAP}{{\scriptstyle\mathrm{SWAP}}}
\DeclareFontFamily{OT1}{pzc}{}
\DeclareFontShape{OT1}{pzc}{m}{it}{ <-> s*[1.2] pzcmi7t }{}
\DeclareMathAlphabet{\mathpzc}{OT1}{pzc}{m}{it}
\newcommand{\Alg}[1]{\mathpzc{#1}}
\newlength\stateheight
\newlength\minimumstatewidth
\tikzset{width/.initial=\minimummorphismwidth}
\tikzset{colour/.initial=white}
\newif\ifblack\pgfkeys{/tikz/black/.is if=black}
\newif\ifwedge\pgfkeys{/tikz/wedge/.is if=wedge}
\newif\ifvflip\pgfkeys{/tikz/vflip/.is if=vflip}
\newif\ifhflip\pgfkeys{/tikz/hflip/.is if=hflip}
\newif\ifhvflip\pgfkeys{/tikz/hvflip/.is if=hvflip}
\def\thickness{0.4pt}
    \gdef\node@@on@layer{%
      \setbox\tikz@tempbox=\hbox\bgroup\pgfonlayer{#1}\unhbox\tikz@tempbox\endpgfonlayer\pgfsetlinewidth{\thickness}\egroup}
\def\node@on@layer{\aftergroup\node@@on@layer}
    \pgfmathsetlength\pgf@xa{.5\pgf@xa+.5\pgf@xb}
    \pgfmathsetlength\pgf@yc{.16666\pgf@yb-.16666\pgf@ya}
\tikzset{inline text/.style =
  {text height=1.2ex,text depth=0.25ex,yshift=0.5mm}}
\tikzset{arrow box/.style =
  {rectangle,inline text,fill=white,draw,
    minimum height=5mm,yshift=-0.5mm,minimum width=5mm}}
\tikzset{bubble/.style =
  {inner sep=0mm,minimum width=3mm,minimum height=3mm,
    draw,shape=circle,fill=white}}
\tikzset{dot/.style =
  {inner sep=0mm,minimum width=1mm,minimum height=1mm,
    draw,shape=circle}}
\tikzset{white dot/.style = {dot,fill=white,text depth=-0.2mm}}
\tikzset{scalar/.style = {diamond,draw,inner sep=1pt}}
\tikzset{square/.style =
  {inner sep=0mm,minimum width=2mm,minimum height=2mm,
    draw,shape=rectangle}}
\tikzset{star/.style = {dot,fill=white,text depth=-0.2mm}}
\tikzset{copier/.style = {dot,fill,text depth=-0.2mm}}
\tikzset{fakecopier/.style = {square,fill,text depth=-0.2mm}}
\tikzset{discarder/.style = {my ground,draw,inner sep=0pt,
    minimum width=4.2pt,minimum height=11.2pt,anchor=input,rotate=90}}
\tikzset{xshiftu/.style = {shift = {(#1, 0)}}}
\tikzset{yshiftu/.style = {shift = {(0, #1)}}}
\tikzset{scriptstyle/.style={font=\everymath\expandafter{\the\everymath\scriptstyle}}}
\begin{document}
\title{Bipartite quantum states admitting a causal explanation}
\author{Minjeong Song}
\email{song.at.qit@gmail.com}
\affiliation{Centre for Quantum Technologies, National University of Singapore}
\author{Arthur J.~Parzygnat}
\email{arthurjp@mit.edu}
\affiliation{Experimental Study Group, Massachusetts Institute of Technology, Cambridge, Massachusetts 02139, USA}

\begin{abstract}
The statistics of local measurements of joint quantum systems can sometimes be used to distinguish the spatiotemporal structure in which they were measured. We first prove that every bipartite separable density matrix is temporally compatible with direct causal influence for arbitrary finite-dimensional quantum systems and measurements of a  tomographically-complete class of observables, which includes all Pauli observables in the case of multi-qubit systems. Equivalently, if a bipartite density matrix is not temporally compatible with direct causal influence, then it must be entangled. We also provide an operational meaning for the two temporal evolutions consistent with such correlations in terms of generalized dephasing channels and pretty good measurements. The two temporal evolutions are Bayesian inverses of each other, which is different from them being Petz recovery maps of each other. Finally, we prove necessary and sufficient conditions for an arbitrary bipartite quantum state to be temporally compatible, thereby providing a temporal analogue of the positive partial transpose criterion valid for quantum systems of any dimension. 
\end{abstract}

\keywords{Temporal compatibility, causal compatibility, quantum state over time, quantum conditioning, pseudo-density matrix, Schur channel, Hadamard channel, generalized dephasing channel, pretty good measurement, Petz recovery map, quantum Bayes' rule, light-touch observable, Jordan product}

\maketitle

\section{Introduction}

The causal compatibility problem asks what possible causal structures are compatible with a given multipartite probability distribution~\cite{navascues2020inflation,wolfe2021qinflation,WolfeSpekkensFritz2019,FrKl23,PearlCausality2009,ABHLS17,CostaShrapnel16,FrKiSpWo2025,FrKi2024,Pienaar2020,BaLoOr19}. While a bipartite distribution $\mathbb{P}(x,y)$ always admits a potential causal explanation by conditioning via Bayes' rule~\cite{Ba1763,ChJa18,Fr20}
\begin{equation}
\mathbb{P}(y|x)\mathbb{P}(x)=\mathbb{P}(x,y)=\mathbb{P}(x|y)\mathbb{P}(y),
\end{equation}
an analogous statement does not hold for quantum systems~\cite{RAVJSR15}. The quantum causal compatibility problem for two observers can be stated as follows~\cite{FJV15,SNREG23,liu2023quantum,JSK23}. Alice and Bob choose a complete set of Pauli observables $\{M_{a}\}$ and $\{N_{b}\}$ for Hilbert spaces $\Alg{H}_{\Alg{A}}$ and $\Alg{H}_{\Alg{B}}$, respectively, representing qubit or multiqubit systems. They repeatedly perform such measurements on their respective systems after reinitializing the systems in order to gather statistics of the product of their measurement outcomes. These statistics asymptotically lead to a collection of expectation values $\langle M_a,N_b \rangle$ indexed by $a$ and $b$. The \emph{causal compatibility problem} then asks what causal structure is compatible with the expectation values $\langle M_a,N_b \rangle$. 

To formalize the causal compatibility problem more mathematically, Ref.~\cite{FJV15} showed that such Pauli correlations $\langle M_a,N_b \rangle$ can be used to define a \emph{pseudo-density matrix} $R$, an analogue of a density matrix that incorporates spatiotemporal quantum correlations, by the expression
\be
\label{eqn:PDMgeneral}
R := \sum_{a,b} \frac{\langle M_a,N_b \rangle}{\dim(\Alg{H}_{\Alg{A}})\dim(\Alg{H}_{\Alg{B}})}  M_a\otimes N_b.
\ee
Conversely, we can obtain these expectation values $\langle M_a,N_b \rangle$ from $R$ by using the fact that the Pauli observables form an orthogonal basis with respect to the Hilbert--Schmidt inner product. Namely, 
\be
\label{eqn:representabilityofEVs}
\Tr\big[R(M_a \otimes N_b)\big] = \langle M_a,N_b \rangle
\ee
for all $a,b$. 

For example, imagine that Alice and Bob make independent measurements without direct causal influence, in which case $R$ is a density matrix in ${\Alg{A}\otimes\Alg{B}}$. This is the case, for example, when Alice and Bob are spatially separated. In this case, we say that the expectation values and associated pseudo-density matrix are \emph{spatially compatible}. As another example, one could imagine the temporal structure with direct causal influence in which first Alice has an initial state $\rho_{\Alg{A}}\in \Alg{A}$ and performs a measurement of $M_{a}$, which induces a post-measurement state as defined by L{\"u}ders' rule~\cite{busch2009luders,BLM1996,hegerfeldt2012luders,khrennikov2009luders,Lu06,vN18,FiWe25}. This post-measurement state then evolves under a quantum channel $\Alg{A}\xrightarrow{\mathcal{E}}\Alg{B}$ only to be retrieved by Bob, who then performs a measurement of $N_{b}$. In this case, the joint expectation value is given by 
\be
\label{eq:twotimeexpectvalues}
\langle M_a,N_b \rangle=\sum_{x,y} xy\, \Tr\Big[ \mathcal{E}(M_{x|a} \rho_{\Alg{A}} M_{x|a}) N_{y|b} \Big],
\ee 
where $\{M_{x|a}\}$ and $\{N_{y|b}\}$ represent the spectral projection operators associated with the Pauli observables~\cite{BMKG13,BCL90,Johansen2007,Wigner1963}. For instance, $M_{x|a}$ signifies that Alice measures observable $M_{a}$ and obtains outcome $x$, which is an eigenvalue of $M_{a}$. One then obtains a pseudo-density matrix from these expectation values via~\eqref{eqn:PDMgeneral}. Similarly, one obtains an analogous pseudo-density matrix if the temporal order is reversed, i.e., Bob makes the measurement first, the system evolves, and then Alice measures. In either of these two cases, we say that the expectation values and the associated pseudo-density matrices are \emph{temporally compatible}.

But now, imagine being given only the expectation values $\<M_{a},N_{b}\>$ without being told which of these three causal structures are compatible with $\<M_{a},N_{b}\>$. Can we then passively certify what the underlying causal structure was purely by inspection of the \emph{observed} statistics? Said differently, with which statistics can we \emph{not} distinguish a spatial structure and a temporal structure? 

Working directly with the expectation values $\<M_{a},N_{b}\>$, and hence the pseudo-density matrix $R$ from~\eqref{eqn:PDMgeneral}, checking spatial compatibility involves verifying whether $R$ is positive or not, since this is exactly when $R$ defines a density matrix (it is assumed that the marginal statistics are compatible with ordinary density matrices so that the trace of $R$ is guaranteed to be $1$). On the other hand, checking temporal compatibility requires more careful inspection. Fortunately, the latter can be obtained explicitly, thanks to the observations made in Refs.~\cite{HHPBS17,liu2023quantum,FuPa24a} that the pseudo-density matrix for the expectation values in~\eqref{eq:twotimeexpectvalues} is given by 
$R=\frac{1}{2}\big\{\rho\otimes1_{\Alg{B}},\Jamio[\mathcal{E}]\big\}$,
where $\{C,D\}=CD+DC$ is the Jordan product and $\Jamio[\mathcal{E}]$ is the Jamio{\l}kowski matrix of $\mathcal{E}$~\cite{Ja72}. Using this expression, we provide a condition for causal compatibility in Theorem~\ref{thm:TempCompat} by viewing 
$R$ as an example of a \emph{state over time} associated with the initial state $\rho$ and quantum channel $\mathcal{E}$~\cite{FuPa22a,HHPBS17,FuPa22,FuPa24a}. 

It is crucial to note that a pseudo-density matrix could be compatible with multiple causal structures. In such a case, we cannot distinguish one causal structure from another in principle using these passive methods, thus giving a fundamental limitation on inferring quantum causal structures. That said, there still remains a quantum advantage for inferring quantum causal structure as compared to the analogous situation in classical probability and statistics~\cite{RAVJSR15}, where if one causal structure is compatible, then Bayes' rule implies that the other two causal structures are also compatible.
These two points illustrate that it is important to identify when we can and cannot distinguish these causal structures. 

A simple example of expectation values that are both spatially and temporally compatible comes from product states (i.e., statistically independent quantum systems) by virtue of there being no correlation. However, it was left as a conjecture in Ref.~\cite{SNREG23} that every separable state, i.e., a statistical mixture of product states, might also be temporally compatible. In other words, do the set of pseudo-density matrices that are both spatially and temporally compatible contain the set of separable states? Moreover, which mixed quantum states (possibly entangled) are both spatially and temporally compatible~\cite{RAVJSR15}? How can we characterize the set of correlations that are both spatially and temporally compatible?

The present paper characterizes this overlapping set of joint measurement statistics when these statistics are obtained from measuring a tomographically-complete set of observables, which we term \emph{light-touch observables}~\cite{liu2023quantum,FuPa24a,PaFu24TSC}. Such observables consist of binary-outcome observables together with the identity operator, and they extend Pauli observables for qubit systems to arbitrary qudit systems, i.e., quantum systems of arbitrary finite dimension. Although restricting to such observables might seem like a limitation, note that causal inference is a difficult problem, even classically. Therefore, it is important to isolate easily computable methods, even at the cost of generality, in order to push the boundaries of our knowledge on quantum causal inference. 

To start, we prove in Theorem~\ref{thm:tempcompSB} that the conjecture from Ref.~\cite{SNREG23}, namely that separable states are temporally compatible, is true (cf.\ Figure~\ref{fig:venndiagram}). 
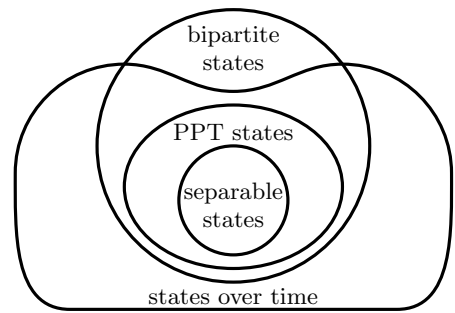
\begin{figure}
\begin{tikzpicture}[scale=0.725]
\draw[very thick] (0,0) circle (2.5cm);
\node at (0,1.625) {\begin{tabular}{c}bipartite\\states\end{tabular}};
\draw[very thick] (0,-1.375) circle (1.125cm);
\node at (0,-1.375) {\begin{tabular}{c}separable\\states\end{tabular}};
\draw[very thick] (0,-1.00) ellipse (2cm and 1.5cm);
\node at (0,0) {PPT states};
\draw[very thick] (0,-3.0) -- (3.0,-3.0) to[out=0,in=-90] (4.0,-0.5) to[out=90,in=0] (2.0,1.5) to[out=180,in=0] (0,1.00) to[out=180,in=0] (-2.0,1.5) to[out=180,in=90] (-4.0,-0.5) to[out=-90,in=180] (-3.0,-3.0) -- cycle;
\node at (0,-2.75) {states over time};
\end{tikzpicture}
\caption{A Venn diagram depicting the sets of bipartite states, separable states, PPT states, and states over time, the last of which contains temporally compatible pseudo-density matrices (although the sets of bipartite states, separable states, and PPT states are convex, the set of states over time is not convex~\cite{SNREG23}). The containment of separable and PPT states inside states over time are proved in Theorem~\ref{thm:tempcompSB} and Corollary~\ref{cor:PPTimpliestempcompat}, respectively.
}
\label{fig:venndiagram}
\end{figure}
In Theorem~\ref{thm:splittemporalmap}, we provide an operational interpretation of the channel $\mathcal{E}$ from~\eqref{eq:twotimeexpectvalues} in terms of a generalized dephasing channel followed by a measure-and-prepare protocol implemented by a pretty good measurement~\cite{hausladen1994pretty}. This provides a potential mechanism exhibiting the causal structure. Afterwards, we generalize our results further and find necessary and sufficient conditions for \emph{any} bipartite state to be temporally compatible in Theorem~\ref{thm:TempCompat}. Interestingly, this necessary and sufficient condition is closely related to the positive partial transpose criterion~\cite{choi1982positive,peres1996separability,horodecki1996necessary,MaTy13}, and yet it holds for arbitrary finite-dimensional quantum systems (cf.\ Corollary~\ref{cor:temporalPPTcriterion} and Corollary~\ref{cor:PPTimpliestempcompat}).

The organization of this paper is as follows. Section~\ref{sec:BG} sets up notation and contains background material on joint expectation values, pseudo-density matrices, states over time, light-touch observables, and the three causal structures mentioned above. Section~\ref{sec:main} contains our main results, which were outlined in the previous paragraph. Section~\ref{sec:discussion} concludes the paper with a discussion and some open questions. There are four appendices that prove the dephasing channel mentioned above is well-defined, discuss the relation to state discrimination, extend Theorem~\ref{thm:TempCompat} as a general test for temporal compatibility, and remove the assumptions on strict positivity of marginal density matrices, respectively.

\section{Background and Notation}
\label{sec:BG}

\subsection{Standard definitions and setting notation}

The following definitions can be found in Refs.~\cite{We89,Ja72,FuPa22a,NiCh11,BCRS19,Fa01,HallQuantum13}. 
Algebras of quantum systems will be denoted as $\Alg{A}$, $\Alg{B}$, \dots, and will always be taken to be full matrix algebras. Meanwhile, a classical system described by a finite set $\Theta$ will be represented by a commutative algebra of functions $\C^{\Theta}=\{f:\Theta\to\C\}$, where multiplication is defined pointwise. We note that \emph{algebras} in this work always mean finite-dimensional $C^*$-algebras, which allow us to discuss classical and quantum systems using the same language. 
A state $\tau\in\Alg{A}\otimes\Alg{B}$ is \define{separable} with respect to the factorization $\Alg{A}\otimes\Alg{B}$ iff there exists a finite set $\Theta$, density matrices $\rho_{\Alg{A};\theta}\in\Alg{A},\rho_{\Alg{B};\theta}\in\Alg{B}$, and a probability $t_{\theta}$ on $\Theta$ such that 
\begin{equation}
\label{eqn:sepstate}
\tau=\sum_{\theta\in\Theta}t_{\theta}\rho_{\Alg{A};\theta}\otimes\rho_{\Alg{B};\theta}. 
\end{equation}
Otherwise, $\tau$ is said to be \define{entangled} with respect to the factorization $\Alg{A}\otimes\Alg{B}$. 
The \define{Hilbert--Schmidt adjoint} $\Alg{B}\xrightarrow{\mathcal{E}^*}\Alg{A}$ of a linear map $\Alg{A}\xrightarrow{\mathcal{E}}\Alg{B}$ is the unique linear map satisfying
$
\Tr\big[\mathcal{E}(A)^{\dag}B\big]=\Tr\big[A^{\dag}\mathcal{E}^*(B)\big]
$
for all $A\in\Alg{A}$ and $B\in\Alg{B}$, where $C^{\dag}$ denotes the conjugate transpose (adjoint) of $C$ and $\Tr$ denotes the trace.
Classical, quantum, and hybrid channels are modeled by completely positive trace-preserving (CPTP) maps. 
Hermitian-preserving trace-preserving maps will be abbreviated as HPTP maps. 
The identity channel from $\Alg{A}$ to itself will be denoted by $\id_{\Alg{A}}$. 
The \define{Jamio{\l}kowski operator} $\Jamio[\mathcal{E}]$ associated with any linear map $\Alg{A}\xrightarrow{\mathcal{E}}\Alg{B}$ is the element of $\Alg{A}\otimes\Alg{B}$ given by 
\be
\Jamio[\mathcal{E}]:=(\id_{\Alg{A}}\otimes\mathcal{E})\big(\mu_{\Alg{A}}^{*}(1_{\Alg{A}})\big),
\ee
where $1_{\Alg{A}}$ is the unit element in $\Alg{A}$ and $\mu_{\Alg{A}}$ is the multiplication map in $\Alg{A}$, which is uniquely characterized by $\mu_{\Alg{A}}(A_1\otimes A_2)=A_1 A_2$ for all $A_1,A_2\in\Alg{A}$ and linearity. 
In the case that $\Alg{A}=\matr_{m}$, where $\matr_{m}$ is the algebra of complex $m\times m$ matrices, it is given by 
\be
\Jamio[\mathcal{E}]=\sum_{i,j}|i\>\<j|\otimes\mathcal{E}\big(|j\>\<i|\big)
=(\id_{\Alg{A}}\otimes\mathcal{E})(\SWAP),
\ee
where $\{|i\>\}$ denotes an orthonormal basis and $\SWAP$ is the swap operator in $\matr_{m}\otimes\matr_{m}$ given by $\SWAP=\sum_{i,j}|i\>\<j|\otimes|j\>\<i|$. 
The Jamio{\l}kowski matrix of $\mathcal{E}$ is not to be confused with the Choi matrix of $\mathcal{E}$, which is defined as
\begin{equation}
\Choi[\mathcal{E}]=\sum_{i,j}|i\>\<j|\otimes\mathcal{E}\big(|i\>\<j|\big)
=(\id_{\Alg{A}}\otimes\mathcal{E})\big(|\Omega\>\<\Omega|\big), 
\end{equation}
where $|\Omega\>:=\sum_{i}|i\>\otimes|i\>$ is the (unnormalized) maximally entangled state with respect to a specified basis $\{|i\>\}$~\cite{Ch75}. 

\subsection{Pseudo-density matrices and the canonical state over time}

Pseudo-density matrices were introduced in Ref.~\cite{FJV15} as extensions of density matrices to spatiotemporally-related quantum systems consisting of several qubits in order to represent spatiotemporal correlations~\cite{BMKG13,Fritz10}. Soon after, Ref.~\cite{HHPBS17} compiled a list of ``quantum states over time'' showing that not a single example from their selection of spatiotemporal extensions of density matrices satisfy a list of natural and mathematically convenient axioms analogous to what are satisfied in the classical setting. Later, Ref.~\cite{FuPa22} showed that, contrary to the no-go theorem of Ref.~\cite{HHPBS17}, a particular state over time, when the notion is defined more carefully~\cite{FuPa22,FuPa22a}, does indeed satisfy the axioms proposed in Ref.~\cite{HHPBS17}. Subsequently, several characterization theorems for this state over time appeared in Refs.~\cite{PFBC23,LiNg23,FuPa24a} (see also Refs.~\cite{BDOV13,BDOV14}), which then motivated calling it the canonical state over time. 
We briefly summarize these definitions and results in this section.

First, we define two-time expectation values, then we review pseudo-density matrices, and finally we define light-touch observables and the canonical state over time, which extends pseudo-density matrices to arbitrary finite-dimensional quantum systems. In what follows, if $\Alg{A}$ and $\Alg{B}$ are algebras, then a pair $(\mathcal{E},\rho_{\Alg{A}})$, with $\rho_{\Alg{A}}\in\Alg{A}$ a density matrix and $\Alg{A}\xrightarrow{\mathcal{E}}\Alg{B}$ a CPTP map, is called a \define{process}. The set of all such processes from $\Alg{A}$ to $\Alg{B}$ will be denoted by $\mathscr{P}(\Alg{A},\Alg{B})$.

\bd
Let $M\in \Alg{A}$ and $N\in \Alg{B}$ be observables, with $M$ having a canonical spectral decomposition $M=\sum_{i=1}^{m}\lambda_i P_i$, where $P_{i}$ is the projector onto the eigenspace of $M$ associated with eigenvalue $\lambda_i$.
Then the \define{two-time expectation value} of the observables $M$ and $N$ with respect to the process $(\mathcal{E},\rho_{\Alg{A}})\in\mathscr{P}(\Alg{A},\Alg{B})$ is the real number 
\be \label{eqn:twotimeexpectationvalues}
\<M,N\>_{(\mathcal{E},\rho_{\Alg{A}})}=\sum_{i}\lambda_i \Tr\big[\mathcal{E}(P_i\rho_{\Alg{A}} P_i)N\big] .
\ee
\ed
Note that L{\"u}ders' update rule is used in~\eqref{eqn:twotimeexpectationvalues}~\cite{busch2009luders,BLM1996,Lu06}.
In general, there does not exist an operator $\varrho_{\Alg{A}\Alg{B}}\in\Alg{A}\otimes\Alg{B}$ that \emph{represents} two-time expectation values in the sense that $\Tr[\varrho_{\Alg{A}\Alg{B}}(M\otimes N)]=\<M,N\>_{(\mathcal{E},\rho_{\Alg{A}})}$ for \emph{all} observables $M\in\Alg{A}$ and $N\in\Alg{B}$~\cite{FuPa24a}. This is primarily due to the fact that the measurement process is nonlinear in the first observable $M$ due to the state update rule. This lack of representability is in contrast to the case of expectation values on a spatially-separated joint system $\Alg{A}\otimes\Alg{B}$, where a joint density matrix $\rho_{\Alg{A}\Alg{B}}$ represents the joint expectation values of \emph{all} observables.
Despite the lack of such an operator $\varrho_{\Alg{A}\Alg{B}}$ for two-time expectation values, Ref.~\cite{FJV15} showed that such expectation values admit such a representation for systems of qubits provided that one restricts to Pauli observables rather than all observables (an explicit example will be provided in Section~\ref{sec:threecausalscenarios}). In what follows, we denote the identity operator and three Pauli operators by  
\be
\sigma_{0}=\begin{bmatrix}1&0\\0&1\end{bmatrix},
\;
\sigma_{1}=\begin{bmatrix}0&1\\1&0\end{bmatrix},
\;
\sigma_{2}=\begin{bmatrix}0&-i\\i&0\end{bmatrix},
\;
\sigma_{3}=\begin{bmatrix}1&0\\0&-1\end{bmatrix}.
\ee
We will occasionally also refer to $\sigma_{0}$ as a Pauli observable. 

\bn
\label{prop:PDM}
Let $\Alg{A}$ and $\Alg{B}$ each denote a system of $m$ qubits, i.e.,  $\Alg{A}=\Alg{B}=\matr_{2^{m}}=\matr_{2}^{\otimes m}$. Given $\alpha\in \{0,1,2,3\}^{m}$, set $\alpha_{j}\in \{0,1,2,3\}$ to be the $j^{\text{th}}$ component of $\alpha$ and $\sigma_{\alpha}\in \matr_{2}^{\otimes m}$ to be the observable given by $\sigma_{\alpha}=\sigma_{\alpha_1}\otimes \cdots \otimes \sigma_{\alpha_m}$. Then, for every process $(\mathcal{E},\rho_{\Alg{A}})\in \mathscr{P}(\Alg{A},\Alg{B})$,  there exists a unique operator $\varrho_{\Alg{A}\Alg{B}}\in\Alg{A}\otimes\Alg{B}$, called the \define{pseudo-density matrix} associated with $(\mathcal{E},\rho_{\Alg{A}})$, such that 
$\<\sigma_{\alpha},\sigma_{\beta}\>_{(\mathcal{E},\rho_{\Alg{A}})}=\Tr[\varrho_{\Alg{A}\Alg{B}} (\sigma_{\alpha}\otimes\sigma_{\beta})]$ for all $\alpha,\beta\in\{0,1,2,3\}^{m}$. Moreover, an explicit formula for $\varrho_{\Alg{A}\Alg{B}}$ is given by 
\be
\varrho_{\Alg{A}\Alg{B}}=\frac{1}{4^m}\sum_{\alpha,\beta\in \{0,1,2,3\}^{m}}\<\sigma_{\alpha},\sigma_{\beta}\>_{(\mathcal{E},\rho_{\Alg{A}})}\sigma_{\alpha}\otimes \sigma_{\beta}
\ee
\en

Refs.~\cite{HHPBS17,liu2023quantum} showed that the pseudo-density matrix in Proposition~\ref{prop:PDM} may alternatively be given by the formula 
$
\varrho_{\Alg{A}\Alg{B}}=\frac{1}{2}\{\rho_{\Alg{A}}\otimes 1_{\Alg{B}},\Jamio[\mathcal{E}]\},
$
where $\rho_{\Alg{A}}=\Tr_{\Alg{B}}[\varrho_{\Alg{A}\Alg{B}}]$ and $\Alg{A}\otimes\Alg{B}\xrightarrow{\Tr_{\Alg{B}}}\Alg{A}$ denotes the partial trace. 
Motivated by the fact that the right-hand-side agrees with the canonical state over time of Ref.~\cite{FuPa22}, Ref.~\cite{FuPa24a} extended Proposition~\ref{prop:PDM} to arbitrary finite-dimensional quantum systems by isolating some key features of Pauli operators, such as the fact that their spectrum is always $+1$ or $\pm 1$.

\bd
\label{defn:lighttouch}
For any algebra $\Alg{A}$, a \define{light-touch observable} in $\Alg{A}$ is 
an observable $A\in\Alg{A}$ such that the spectrum of $A$ is either $\{\lambda\}$ or $\{\pm\lambda\}$ for some $\lambda\ge0$. 
\ed

\bn
\label{prop:oprep}
Let $\Alg{A}$ and $\Alg{B}$ be algebras of arbitrary dimension and let $(\mathcal{E},\rho_{\Alg{A}})\in \mathscr{P}(\Alg{A},\Alg{B})$ be a process.
Then there exists a unique element 
$\varrho_{\Alg{A}\Alg{B}}\in\Alg{A}\otimes\Alg{B}$ such that 
\be
\<A,B\>_{(\mathcal{E},\rho_{\Alg{A}})}=\Tr\big[\varrho_{\Alg{A}\Alg{B}}(A\otimes B)\big]
\ee
for all light-touch observables $A\in \Alg{A}$ and for all observables $B\in\Alg{B}$. Moreover, $\varrho_{\Alg{A}\Alg{B}}$ is the \define{canonical state over time}  
$
\varrho_{\Alg{A}\Alg{B}}:=\frac{1}{2}\big\{\rho_{\Alg{A}}\otimes 1_{\Alg{B}},\Jamio[\mathcal{E}]\big\}.
$
\en

Such a formula will prove to be convenient for calculations determining temporal compatibility, which we define more precisely in the next section.

\subsection{Three causal structures for two observers}
\label{sec:threecausalscenarios}

Here, we isolate the causal structures that we will consider for two observers making local measurements on quantum systems. 
Let $\{M_a\}$ and $\{N_b\}$ be sets of light-touch observables on two algebras $\Alg{A}$ and $\Alg{B}$, respectively. We interpret $\{M_a\}$ and $\{N_b\}$ as measurements of observables performed by Alice and Bob on systems $\Alg{A}$ and $\Alg{B}$, respectively. We view such measurements by their associated projective instruments $\Alg{A}\xrightarrow{\mathcal{M}_{a}}\Alg{A}\otimes\C^{X_{a}}$ and $\Alg{B}\xrightarrow{\mathcal{N}_{b}}\Alg{B}\otimes\C^{Y_{b}}$, where $X_{a}$ and $Y_{b}$ denote the sets of outcomes labeled by $x$ and $y$, respectively~\cite{DaLe70,FuPa22a,Ozawa1984}. Namely, $\mathcal{M}_{a}$ defines a collection $\mathcal{M}_{x|a}:\Alg{A}\to\Alg{A}$ of completely positive trace nonincreasing maps given by the L{\"u}ders projection map $\mathcal{M}_{x|a}(\rho)=M_{x|a}\rho M_{x|a}$ for each $x\in X_{a}$, and similarly for $\mathcal{N}_{b}$.
The induced expectation values $\<M_a,N_b\>$ are spatially or temporally compatible or incompatible in the following sense~\cite{liu2023quantum,JSK23,LCD24,SNREG23,FuPa24a}. 
        \begin{enumerate}[(a)]
        \item There exists a density matrix $\rho_{\Alg{A}\Alg{B}}\in\Alg{A}\otimes\Alg{B}$ such that $\<M_a,N_b\>=\Tr\big[\rho_{\Alg{A}\Alg{B}}(M_a\otimes N_b)\big]$ for all $a,b$. In such a case, the expectation values are said to be \define{spatially compatible} (possibly with a \define{common cause}) on the joint system $\Alg{A}\otimes\Alg{B}$. In diagrammatic notation~\cite{CoKi17,HeVi19,FGGPS22,Pa24,FrKl23}, this structure is drawn as 
        \[
        \begin{tikzpicture}[font=\small]
\node[state] (omega) at (0,0) {\;$\rho_{\Alg{A}\Alg{B}}$\;};
\coordinate (B) at (0.25,2.00) {};
\node[discarder] (X) at (0.85,1.65) {};
\coordinate (A) at (-0.85,2.00) {};
\node[discarder] (Y) at (-0.25,1.65) {};
\node[arrow box] (e) at (-0.55,0.85) {\;$\mathcal{M}_{a}$\;};
\node[arrow box] (g) at (0.55,0.85) {\;\,$\mathcal{N}_{b}$\,\;};
\draw (omega) ++(0.55, 0) to (g);
\draw[double] (g) ++(-0.30,0.25) to (B);
\draw (g) ++(0.30,0.25) to (X);
\draw (omega) ++(-0.55, 0) to (e);
\draw[double] (e) ++(-0.30,0.25) to (A);
\draw (e) ++(0.30,0.25) to (Y);
\path[scriptstyle]
node at (-0.40,1.45) {$\Alg{A}$}
node at (1.00,1.45) {$\Alg{B}$}
node at (0.53,1.85) {$\C^{Y_{b}}$}
node at (-0.75,0.35) {$\Alg{A}$}
node at (-1.13,1.85) {$\C^{X_{a}}$}
node at (0.70,0.35) {$\Alg{B}$};
\end{tikzpicture}
        \]
        where single lines refer to quantum systems (noncommutative algebras), double lines refer to classical systems (commutative algebras), the symbol \scalebox{0.65}{$\begin{tikzpicture}\node[discarder] (d) at (0,0.3) {}; \draw (0,0) -- (d); \end{tikzpicture}$} denotes tracing out, and the diagram is read from bottom to top.
        Note that there are no assumptions on the density matrix $\rho_{\Alg{A}\Alg{B}}$, eg.\ it need not be separable, entangled, etc. If no such $\rho_{\Alg{A}\Alg{B}}$ exists, then the expectation values are said to be \define{spatially incompatible}. 
        \item There exists a density matrix $\rho_{\Alg{A}}\in\Alg{A}$ and a quantum channel $\Alg{A}\xrightarrow{\mathcal{E}}\Alg{B}$ such that $\<M_a,N_b\>$ equals the two-time expectation value associated with the process $(\mathcal{E},\rho_{\Alg{A}})$, i.e., 
        \be
        \label{eqn:twotimeAtoB}
        \<M_a,N_b\>=\sum_{x\in X_{a}}\lambda_{x}\Tr\big[\mathcal{E}(M_{x|a}\rho_{\Alg{A}}M_{x|a}) N_b\big]
        \ee
        for all $a,b$, where $M_{a}=\sum_{x}\lambda_{x}M_{x|a}$
        is the canonical spectral decomposition of the observable $M_a$, so that $M_{x|a}$ is the projector onto the eigenspace of $M_{a}$ associated with eigenvalue $\lambda_x$. In such a case, the expectation values are said to be \define{temporally compatible} with \define{temporal order} $\Alg{A}\to\Alg{B}$, i.e., $\Alg{A}$ has a \define{direct influence} on $\Alg{B}$. In diagrammatic notation, this structure is drawn as
        \[
        \begin{tikzpicture}[font=\small]
\node[state] (omega) at (0,-0.8) {\,$\rho_{\Alg{A}}$\,};
\node[arrow box] (q) at (0,-0.1) {\;\;\;\;$\mathcal{M}_{a}$\;\;\;\;};
\node[arrow box] (g) at (0.40,0.8) {$\mathcal{E}$};
\node[arrow box] (N) at (0.40,1.7) {\;\,$\mathcal{N}_{b}$\,\;};
\node[discarder] (d) at (0.65,2.45) {};
\draw (omega) to (q);
\draw (q) ++(0.40,0.25) to (g);
\draw[double] (q) ++(-0.40,0.25) to (-0.40,2.85);
\draw (g) to (N); 
\draw (N) ++(0.25,0.25) to (d);
\draw[double] (N) ++(-0.25,0.25) to (0.15,2.85);
\path[scriptstyle]
node at (0.20,-0.55) {$\Alg{A}$}
node at (0.60,0.35) {$\Alg{A}$}
node at (0.60,1.30) {$\Alg{B}$}
node at (0.80,2.25) {$\Alg{B}$}
node at (-0.10,2.75) {$\C^{Y_{b}}$}
node at (-0.72,2.75) {$\C^{X_{a}}$};
\end{tikzpicture}
        \]
        \item There exists a density matrix $\rho_{\Alg{B}}\in\Alg{B}$ and a quantum channel $\Alg{B}\xrightarrow{\mathcal{F}}\Alg{A}$ such that $\<M_a,N_b\>$ equals the two-time expectation value associated with the process $(\mathcal{F},\rho_{\Alg{B}})$, i.e., 
        \be
        \label{eqn:twotimeBtoA}
        \<M_a,N_b\>=\sum_{y\in Y_{b}}\mu_{y}\Tr\big[\mathcal{F}(N_{y|b}\rho_{\Alg{B}}N_{y|b}) M_a\big]
        \ee
        for all $a,b$, where $N_{b}=\sum_{y}\mu_{y}N_{y|b}$ is the canonical spectral decomposition of the observable $N_b$, so that $N_{y|b}$ is the projector onto the eigenspace of $N_{b}$ associated with eigenvalue $\mu_{y}$. In such a case, the expectation values are said to be \define{temporally compatible} with \define{temporal order} $\Alg{B}\to\Alg{A}$, i.e., $\Alg{B}$ has a \define{direct influence} on $\Alg{A}$. In diagrammatic notation, this structure is drawn as
        \[
        \begin{tikzpicture}[font=\small]
\node[state] (omega) at (0,-0.8) {\,$\rho_{\Alg{B}}$\,};
\node[arrow box] (q) at (0,-0.1) {\;\;\;\;$\mathcal{N}_{b}$\;\;\;\;};
\node[arrow box] (g) at (0.42,0.8) {$\mathcal{F}$};
\node[arrow box] (N) at (0.42,1.7) {\;\,$\mathcal{M}_{a}$\,\;};
\node[discarder] (d) at (0.69,2.45) {};
\draw (omega) to (q);
\draw (q) ++(0.42,0.25) to (g);
\draw[double] (q) ++(-0.42,0.25) to (-0.42,2.85);
\draw (g) to (N); 
\draw (N) ++(0.27,0.25) to (d);
\draw[double] (N) ++(-0.25,0.25) to (0.17,2.85);
\path[scriptstyle]
node at (0.20,-0.55) {$\Alg{B}$}
node at (0.60,0.35) {$\Alg{B}$}
node at (0.60,1.30) {$\Alg{A}$}
node at (0.84,2.25) {$\Alg{A}$}
node at (-0.10,2.75) {$\C^{X_{a}}$}
node at (-0.70,2.75) {$\C^{Y_{b}}$};
\end{tikzpicture}
        \]
        \end{enumerate}
The first causal structure is referred to as \define{spatial}, while the latter two causal structures are referred to as \define{temporal}. 
When there does \emph{not} exist either a channel $\Alg{A}\xrightarrow{\mathcal{E}}\Alg{B}$ or a channel $\Alg{B}\xrightarrow{\mathcal{F}}\Alg{A}$ satisfying~\eqref{eqn:twotimeAtoB} or~\eqref{eqn:twotimeBtoA}, the expectation values are said to be \define{temporally incompatible} (called \emph{atemporal} in Ref.~\cite{SNREG23}).
Although other causal structures can be considered, such as those in Ref.~\cite{liu2023quantum}, we will focus on these three for simplicity. Additionally, it is important to stress that our notion of temporal compatibility is not a property of the states and evolutions alone, but also of measurements, which in this case are of light-touch observables with their corresponding projective instruments. We will comment more on this point in the Discussion section. 

\begin{remark}
\label{rmk:conditionals}
Given a bipartite density matrix $\rho_{\Alg{A}\Alg{B}}$, if the expectation values $\<M_{a},N_{b}\>$ are also temporally compatible in the $\Alg{A}\to\Alg{B}$ direction, then we call the channel $\Alg{A}\xrightarrow{\mathcal{E}}\Alg{B}$ a \define{temporal channel} associated with $\rho_{\Alg{A}\Alg{B}}$ (and similarly for $\mathcal{F}$ if the direction is $\Alg{B}\to\Alg{A}$).
One could justify calling temporal channels \define{conditionals} due to the fact that they provide analogues of conditional probabilities~\cite{PaQPL21,PaBayes,Pa17}. In fact, such conditionals indeed satisfy a quantum generalization of Bayes' rule, as shown in Refs.~\cite{FuPa22a,PaFu24TSC}. 
\end{remark}

To analyze the three causal structures above, we will utilize the fact that the associated expectation values can be represented by a matrix trace formula for all three causal structures. 
Intuitively, the representability means that the expectation values can be calculated directly from the observables and a single matrix valid for those observables, without having to know the causal structure. 
In the following definition, a collection of observables in an algebra is said to be \define{tomographically complete} iff it forms a spanning set of the algebra, where one takes complex linear combinations to define the span. 

\bd
\label{defn:representability}
Let $\Alg{A}$ and $\Alg{B}$ be two algebras. Let $\{M_a\}$ and $\{N_b\}$ be tomographically-complete sets of observables in $\Alg{A}$ and $\Alg{B}$, respectively. An associated collection of expectation values $\<M_a,N_b\>\in\R$ is \define{representable} for the observables $\{M_a\}$ and $\{N_b\}$ iff there exists a matrix $R\in\Alg{A}\otimes\Alg{B}$ such that 
\be
\label{eqn:representability}
\Tr\big[R(M_a\otimes N_b)\big]=\<M_a,N_b\>
\ee
for all $a,b$. In such a case, we say that the expectation values $\<M_{a},N_{b}\>$ are \define{represented} by $R$. 
\ed

As an immediate example, we know that if the expectation values are spatially compatible, then they are representable by a joint density matrix $\rho_{\Alg{A}\Alg{B}}$ for \emph{all} observables in $\Alg{A}$ and $\Alg{B}$.
However, expectation values coming from direct causal influence are not representable for \emph{all} observables in general. This is why the above definition of representability is made for general tomographically-complete \emph{sets} of observables, but not necessarily for their real linear combinations, and hence not necessarily \emph{all} observables. 
A simple example illustrating this point is given in Example 2.11 (and generalized in Theorem 3.2) in Ref.~\cite{FuPa24a}. Namely, let $\Alg{A}=\Alg{B}=\matr_{2}$, and let $(\mathcal{E},\rho_{\Alg{A}})$ be the process given by $\mathcal{E}=\id_{\Alg{A}}$ and $\rho_{\Alg{A}}=|-\>\<-|=\frac{1}{2}\left[\begin{smallmatrix}1&-1\\-1&1\end{smallmatrix}\right]$. Then the two-time expectation values $\<M_{a},N_{b}\>$ given by~\eqref{eqn:twotimeAtoB} are \emph{not} linear in the left coordinate $M_{a}$, while the expression on the left of~\eqref{eqn:representability} \emph{is} linear in $M_{a}$, thus leading to a contradiction (an explicit pair of observables illustrating lack of linearity is presented in Ref.~\cite{FuPa24a}).

Note, however, that when the sets of observables in Definition~\ref{defn:representability} are taken to be Pauli observables for systems of qubits (and \emph{not} their linear combinations), then the pseudo-density matrix represents their expectation values~\cite{FJV15}. A similar statement can be made for arbitrary quantum systems with light-touch observables~\cite{FuPa24a}. 
Namely, Proposition~\ref{prop:oprep} says that if the observables are light-touch observables and if the expectation values are temporally compatible, then they are represented by the hermitian matrix 
\be
\label{eq:Estarrho}
\mathcal{E}\star\rho_{\Alg{A}}:=\frac{1}{2}\big\{\rho_{\Alg{A}}\otimes 1_{\Alg{B}},\Jamio[\mathcal{E}]\big\}
\ee
in the case that $\Alg{A}$ has a direct influence on $\Alg{B}$ and 
\be
\label{eq:Binverserep}
\gamma\left(\mathcal{F}\star\rho_{\Alg{B}}\right)
=\frac{1}{2}\big\{1_{\Alg{A}}\otimes\rho_{\Alg{B}},\Jamio[\mathcal{F}^*]\big\}
\ee
in the case that $\Alg{B}$ has a direct influence on $\Alg{A}$, where $\gamma$ is the swap map defined by unique linear extension of 
\be
\gamma(B\otimes A):=A\otimes B.
\ee
Equation~\eqref{eq:Binverserep} follows from Lemma~2 in Appendix~A of Ref.~\cite{FuPa22a}. 
The representability of these three causal structures for light-touch observables will allow us to prove interesting relations between separability, entanglement, and temporal compatibility, as we show in the next section.

\section{Spatial versus temporal correlations}
\label{sec:main}

\subsection{Separability implies temporal compatibility}

\bt
\label{thm:tempcompSB}
Let $\tau$ be a separable density matrix in $\Alg{A}\otimes\Alg{B}$ and set $\rho_{\Alg{A}}=\Tr_{\Alg{B}}[\tau]$. Assume $\rho_{\Alg{A}}$ is a faithful state, i.e., $\rho_{\Alg{A}}>0$. Then there exists a unique CPTP map $\Alg{A}\xrightarrow{\mathcal{E}}\Alg{B}$ such that $\tau=\mathcal{E}\star\rho_{\Alg{A}}$. Similarly, if $\rho_{\Alg{B}}:=\Tr_{\Alg{A}}[\tau]>0$, then there exists a unique CPTP map $\Alg{B}\xrightarrow{\mathcal{F}}\Alg{A}$ such that $\tau=\gamma(\mathcal{F}\star\rho_{\Alg{B}})$.
\et

Theorem~\ref{thm:tempcompSB} says that the expectation values associated with separable density matrices in $\Alg{A}\otimes\Alg{B}$ are always temporally compatible, thereby resolving an open question in Ref.~\cite{SNREG23}. The assumption $\rho_{\Alg{A}}>0$ is not needed in this theorem and $\rho_{\Alg{A}}\ge0$ is sufficient to guarantee existence, but not necessarily uniqueness (see Appendix~\ref{sec:nonfaithful} for details). 
To prove Theorem~\ref{thm:tempcompSB}, we will make use of the following lemmas and proposition. 

\blem
\label{lem:invsumprob}
Let $\{p_{i}\}_{i=1,\dots,m}$ be a collection of real strictly positive numbers. Then the $m\times m$ matrix 
\be
\sum_{i,j}\frac{|i\>\<j|}{p_{i}+p_{j}}
\ee
is a positive matrix. 
\elem

\bprf[Proof of Lemma~\ref{lem:invsumprob}]
Such a matrix is called a Cauchy matrix, which is known to be positive (see Exercise 1.1.2 in Ref.~\cite{Bh07} for example). 
\eprf

For the next lemma, we recall that the \define{Hadamard--Schur product} $\odot$ on square matrices of the same dimension is given by 
\be 
\<i|A\odot B|j\>=\<i|A|j\>\<i|B|j\>.
\ee

\blem
\label{lem:Schurproductthm}
The Hadamard--Schur product of two positive matrices is positive. 
\elem

\bprf[Proof of Lemma~\ref{lem:Schurproductthm}]
This is called the Schur Product Theorem (see Exercise 1.2.5 on page 8 in Ref.~\cite{Bh07}). 
\eprf

\bn
\label{prop:HPTPconditional}
Let $\Alg{A}=\matr_{m}$ and $\Alg{B}=\matr_{n}$.
Let $\tau$ be any hermitian trace-1 matrix in $\Alg{A}\otimes\Alg{B}$ and set $\rho_{\Alg{A}}=\Tr_{\Alg{B}}[\tau]$. Assume $\rho_{\Alg{A}}$ is a faithful state, i.e., $\rho_{\Alg{A}}>0$. Then there exists a unique HPTP map $\Alg{A}\xrightarrow{\mathcal{E}}\Alg{B}$ such that $\tau=\mathcal{E}\star\rho_{\Alg{A}}$. Moreover, an explicit formula for $\mathcal{E}$ is given by the unique linear extension of the map that satisfies 
\be
\label{eq:qconditionalmap}
\mathcal{E}\big(|i\>\<j|\big)=\frac{2}{p_i+p_j}\Tr_{\Alg{A}}\Big[\tau\big(|i\>\<j|\otimes1_{\Alg{B}}\big)\Big]
\ee
for all $i,j\in\{1,\dots,m\}$, where $\{|i\>\}$ is an orthonormal basis of eigenvectors for $\rho_{\Alg{A}}$, i.e., $\rho_{\Alg{A}}=\sum_{i=1}^{m}p_{i}|i\>\<i|$, with $\{p_{i}\}$ a probability distribution. 
\en

We note that the map $\mathcal{E}$ obtained from linearly extending~\eqref{eq:qconditionalmap} is well-defined, i.e., it is independent of the choice of basis $\{|i\>\}$ for $\rho_{\Alg{A}}$. This will be explained in the proof of Proposition~\ref{prop:HPTPconditional} momentarily. 
The more general case where $\rho_{\Alg{A}}\ge0$ is discussed in Appendix~\ref{sec:nonfaithful}.

\bprf[Proof of Proposition~\ref{prop:HPTPconditional}]
Let us check that the formula~\eqref{eq:qconditionalmap} for $\mathcal{E}$ provides the unique solution to $\tau=\mathcal{E}\star\rho_{\Alg{A}}$ in the case that $\tau$ is an arbitrary hermitian matrix with trace equal to $1$. If such a map $\mathcal{E}$ were to exist, it must necessarily satisfy 
\be
\label{eqn:tauJordanproduct}
\tau=\mathcal{E}\star\rho_{\Alg{A}}
=\frac{1}{2}\big\{\rho_{\Alg{A}}\otimes1_{\Alg{B}},\Jamio[\mathcal{E}]\big\}.
\ee
This equation is of the form $B=AX+XA$, which is a special case of the Sylvester (sometimes Lyapunov) equation~\cite{BhRo97}. When $A$ is positive definite, it has a unique solution $X$ given by $X=\int_{0}^{\infty}e^{-t A}Be^{-t A}\,dt$. Hence, $\Jamio[\mathcal{E}]$ can be expressed as
\be
\label{eqn:JEfromSylvester}
\Jamio[\mathcal{E}]=\int_{0}^{\infty}\left(e^{-\frac{t}{2}\rho_{\Alg{A}}}\otimes1_{\Alg{B}}\right)\tau\left(e^{-\frac{t}{2}\rho_{\Alg{A}}}\otimes1_{\Alg{B}}\right)dt.
\ee
Combining this with the fact that $\Jamio$ is an isomorphism, there exists a unique (well-defined) linear map $\mathcal{E}$ satisfying~\eqref{eqn:tauJordanproduct}. 
By choosing an orthonormal eigenbasis $\{|i\>\}$ of $\rho_{\Alg{A}}$, we can obtain an alternative expression for $\mathcal{E}$. Namely, rewriting~\eqref{eqn:tauJordanproduct} in this basis gives
\be
\tau=\frac{1}{2}\sum_{i,j}(p_{i}+p_{j})|i\>\<j|\otimes\mathcal{E}\big(|j\>\<i|\big).
\ee
Multiplying by $|k\>\<l|\otimes1_{\Alg{B}}$ (on either the right or the left) and taking the partial trace over $\Alg{A}$ gives 
\be
\Tr_{\Alg{A}}\Big[\tau\big(|k\>\<l|\otimes1_{\Alg{B}}\big)\Big]=\frac{p_{k}+p_{l}}{2}\mathcal{E}\big(|k\>\<l|\big), 
\ee
from which the claimed formula~\eqref{eq:qconditionalmap} for $\mathcal{E}$ follows. 
\eprf

A manifestly basis-independent formula for $\mathcal{E}$ can be obtained from~\eqref{eqn:JEfromSylvester} via $\mathcal{E}(A)=\Tr_{\Alg{A}}\big[\Jamio[\mathcal{E}](A\otimes1_{\Alg{B}})\big]$, though we will not make use of such an expression here. 

\bprf[Proof of Theorem~\ref{thm:tempcompSB}]
We will prove this theorem in the special case where $\Alg{A}$ and $\Alg{B}$ are full matrix algebras, say $\Alg{A}=\matr_{m}$ and $\Alg{B}=\matr_{n}$. 
Since $\tau$ is a separable state, it can be expressed as in~\eqref{eqn:sepstate}. 
Then the formula for $\mathcal{E}$ from~\eqref{eq:qconditionalmap} becomes
\be
\mathcal{E}\big(|i\>\<j|\big)=\frac{2}{p_{i}+p_{j}}\sum_{\theta} t_{\theta}\<j|\rho_{\Alg{A};\theta}|i\>\rho_{\Alg{B};\theta}. \label{eqn:temporalchannel}
\ee
The Choi matrix of $\mathcal{E}$ is then given by 
\begin{align}
\Choi[\mathcal{E}]
&=\sum_{i,j}|i\>\<j|\otimes\mathcal{E}\big(|i\>\<j|\big) \nonumber \\
&=\sum_{\theta} t_{\theta}\left(2\sum_{i,j}\frac{\<j|\rho_{\Alg{A};\theta}|i\>}{p_{i}+p_{j}}|i\>\<j|\right)\otimes\rho_{\Alg{B};\theta}.
\label{eqn:ChoiEseparable}
\end{align}
Note that the matrix inside the parentheses can be expressed as the Hadamard--Schur product 
\be
2\sum_{i,j}\frac{\<j|\rho_{\Alg{A};\theta}|i\>}{p_{i}+p_{j}}|i\>\<j|
=2\sum_{i,j}\frac{\<i|\rho_{\Alg{A};\theta}^{T}|j\>}{p_{i}+p_{j}}|i\>\<j|
=\omega\odot\rho_{\Alg{A};\theta}^{T},
\ee
where ${}^{T}$ denotes the transpose with respect to the basis $\{|i\>\}$ and 
\be
\label{eqn:omega}
\omega:=2\sum_{i,j}\frac{|i\>\<j|}{p_{i}+p_{j}}.
\ee 
By Lemma~\ref{lem:invsumprob}, $\omega$ is positive. Since $\rho_{\Alg{A};\theta}$ is positive, $\rho_{\Alg{A};\theta}^{T}$ is positive as well. These two facts combined with the Hadamard--Schur product theorem (Lemma~\ref{lem:Schurproductthm}) implies that $\omega\odot\rho_{\Alg{A};\theta}^{T}$ is positive. Therefore, 
\be
\label{eq:choichannel}
\Choi[\mathcal{E}]=\sum_{\theta} t_{\theta}\left(\omega\odot\rho_{\Alg{A};\theta}^{T}\right)\otimes\rho_{\Alg{B};\theta}
\ee
is a convex combination of positive elements and is therefore positive. Hence, $\mathcal{E}$ is CPTP by Choi's theorem~\cite{Ch75}, thus concluding the proof. 
\eprf

An immediate consequence of Theorem~\ref{thm:tempcompSB} is temporal compatibility in both temporal directions for separable states. 

\bc
\label{cor:tempcompSB}
Let $\Alg{A}=\matr_{m}$ and $\Alg{B}=\matr_{n}$.
Let $\tau$ be a separable density matrix in $\Alg{A}\otimes\Alg{B}$ and set $\rho_{\Alg{A}}=\Tr_{\Alg{B}}[\tau]$ and $\rho_{\Alg{B}}=\Tr_{\Alg{A}}[\tau]$. Assume $\rho_{\Alg{A}}>0$ and $\rho_{\Alg{B}}>0$. Then $\tau$ is temporally compatible in both directions, i.e., there exist unique CPTP maps $\Alg{A}\xrightarrow{\mathcal{E}}\Alg{B}$ and $\Alg{B}\xrightarrow{\mathcal{F}}\Alg{A}$ such that $\tau=\mathcal{E}\star\rho_{\Alg{A}}$ and $\gamma(\tau)=\mathcal{F}\star\rho_{\Alg{B}}$. 
\ec

We can rephrase this corollary in terms of its contrapositive providing a sufficient condition for entanglement. 

\bc
\label{cor:temporalPPTcriterion}
Let $\Alg{A}=\matr_{m}$ and $\Alg{B}=\matr_{n}$.
Let $\tau$ be a density matrix in $\Alg{A}\otimes\Alg{B}$, and suppose $\rho_{\Alg{A}}:=\Tr_{\Alg{B}}[\tau]>0$ and $\rho_{\Alg{B}}:=\Tr_{\Alg{B}}[\tau]>0$. If $\tau$ is \emph{not} temporally compatible in at least one direction, then $\tau$ must be entangled. 
\ec

We note, however, that it is known that there exist entangled states that are temporally compatible~\cite{SNREG23}. Therefore, Corollary~\ref{cor:tempcompSB} only gives a sufficient condition for temporal compatibility. 
We will come back to necessary and sufficient conditions later in this work.

\subsection{Unwrapping the temporal channel associated with a separable state}

In this section, assuming $\Alg{A}=\matr_{m}$ and $\Alg{B}=\matr_{n}$, we will provide an interpretation for the temporal channels $\Alg{A}\xrightarrow{\mathcal{E}}\Alg{B}$ and $\Alg{B}\xrightarrow{\mathcal{F}}\Alg{A}$ coming from a bipartite density matrix $\tau\in\Alg{A}\otimes\Alg{B}$ as in Theorem~\ref{thm:tempcompSB}. Namely, we will show that each of these channels can be expressed as a dephasing channel followed by a measure-and-prepare protocol where the measurement operators form a pretty good measurement~\cite{hausladen1994pretty}. For concreteness, let us demonstrate this with an arbitrary separable state $\tau=\sum_{\theta\in\Theta}t_{\theta}\rho_{\Alg{A};\theta}\otimes\rho_{\Alg{B};\theta}$ and its corresponding channel $\mathcal{E}$ in the temporal direction $\Alg{A}\rightarrow\Alg{B}$ so that $\tau={\mathcal{E}\star\rho_{\Alg{A}}}$. A completely analogous set of results holds for the temporal channel $\Alg{B}\xrightarrow{\mathcal{F}}\Alg{A}$ in the other direction. The main theorem in this section reads as follows. 

\bt
\label{thm:splittemporalmap}
Let $\Alg{A}\xrightarrow{\mathcal{E}}\Alg{B}$  be the temporal channel~\eqref{eq:qconditionalmap} associated with the joint separable density matrix $\tau\in\Alg{A}\otimes\Alg{B}$. Then $\mathcal{E}$ can be expressed as a composite of two channels 
$
    \mathcal{E} = \mathcal{G}\circ\mathcal{D}, 
$
where $\Alg{A}\xrightarrow{\mathcal{D}}\Alg{A}$ represents a generalized dephasing channel and $\Alg{A}\xrightarrow{\mathcal{G}}\Alg{B}$ represents a pretty good measure-and-prepare channel. 
Explicitly, $\mathcal{D}$ is defined on $A\in\Alg{A}$ as 
\be
\label{eq:dephasing}
\mathcal{D}(A) := \sum_{i,j}\frac{2\sqrt{p_ip_j}}{p_{i}+p_{j}}  |i\>\<i| A |j\>\<j|,
\ee
where $\{|i\>\}$ is an orthonormal basis of eigenvectors for $\rho_\Alg{A}$, i.e., $\rho_\Alg{A}=\sum_{i}p_{i}|i\>\<i|$, so that $\{p_{i}\}$ forms a probability distribution.
Secondly, $\mathcal{G}$ is defined on $A\in\Alg{A}$ as
\be
\label{eq:GmapX}
    \mathcal{G}(A) := \Tr_{\Alg{A}}\Big[\tau\Big(\big(\rho_{\Alg{A}}^{-\frac{1}{2}}A\rho_{\Alg{A}}^{-\frac{1}{2}}\big)\otimes1_{\Alg{B}}\Big)\Big].
\ee
Equivalently, 
\be
\label{eq:Gmap}
    \mathcal{G}(A) =\sum_\theta \Tr[G_\theta A] \rho_{\Alg{B};\theta},
\ee
where $\tau=\sum_{\theta\in\Theta} t_{\theta}\rho_{\Alg{A};\theta}\otimes\rho_{\Alg{B};\theta}$ is a convex decomposition of $\tau$ into product states, $G_{\theta}:= t_\theta \rho_{\Alg{A}}^{-\frac{1}{2}} \rho_{\Alg{A};\theta} \rho_\Alg{A}^{-\frac{1}{2}}$, and $\{G_{\theta}\}$ is the pretty good measurement that discriminates the ensemble $\{t_\theta,\rho_{\Alg{A};\theta}\}$ with average state $\sum_{\theta} t_\theta \rho_{\Alg{A};\theta}$ equal to $\rho_\Alg{A}$. 
\et

\begin{figure}
\begin{tikzpicture}[scale=0.98]
\node at (-3,-1) {\includegraphics[width=2.5cm,trim={2.8cm 3.4cm 2.1cm 2.5cm},clip]{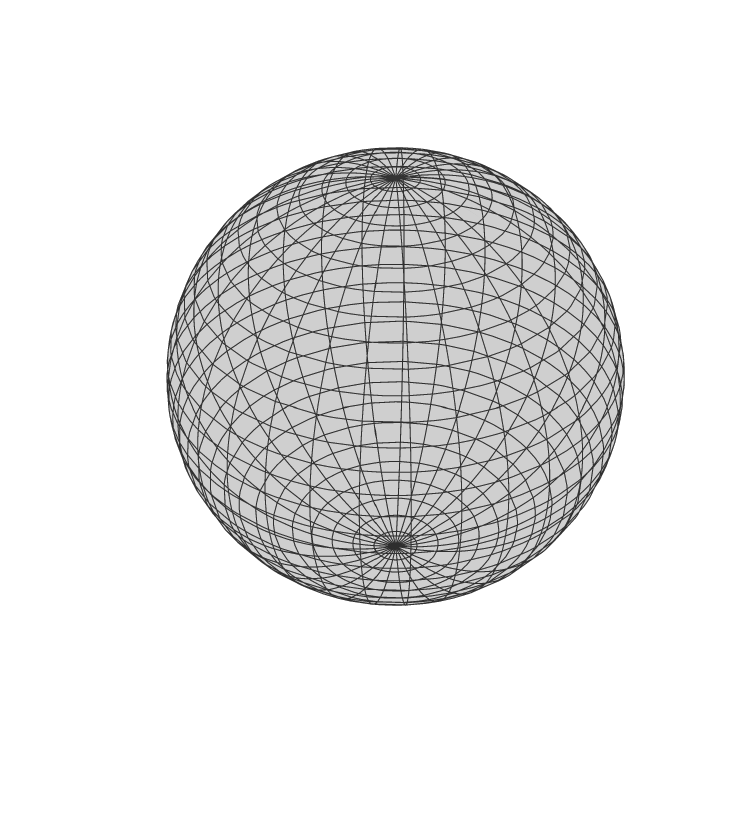} }; 
\node at (0,1) {\includegraphics[width=2.5cm,trim={1.3cm 2.2cm 1.8cm 1.3cm},clip]{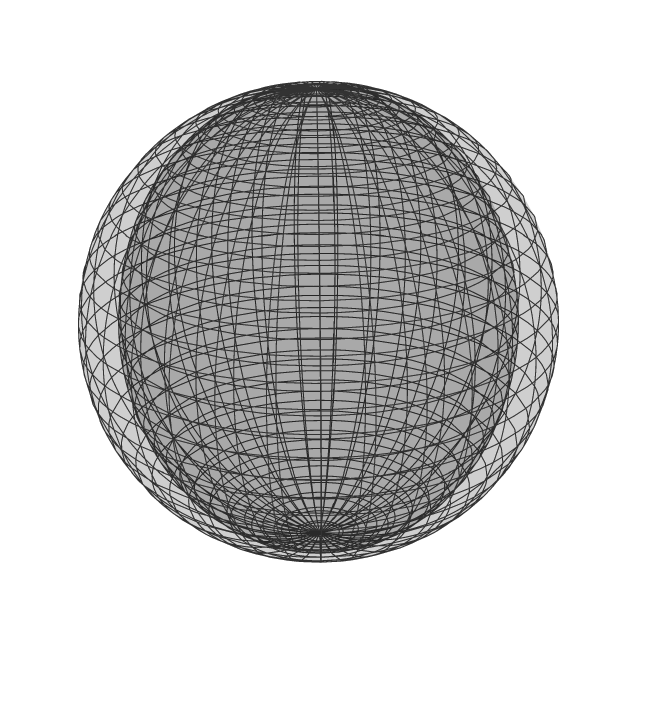} }; 
\node at (3,-1) {\includegraphics[width=2.5cm,trim={2.8cm 3.4cm 2.1cm 2.5cm},clip]{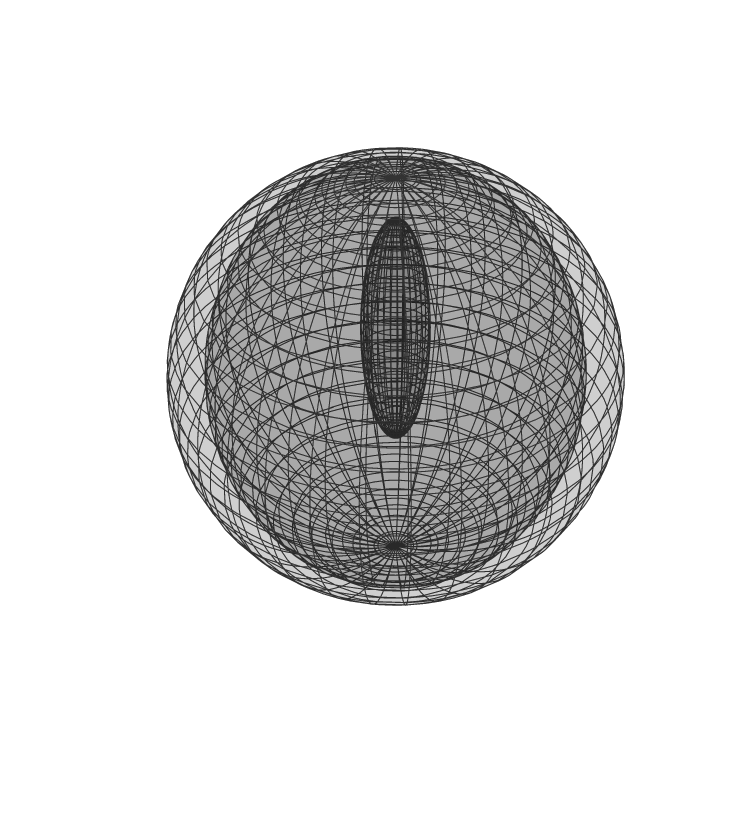} }; 
\draw[very thick,->] (-1.75,-0.25) -- node[above,xshift=-5pt]{$\mathcal{D}$} (-1.25,0.25);
\draw[very thick,->] (1.25,0.25) -- node[above,xshift=5pt]{$\mathcal{G}$} (1.75,-0.25);
\draw[very thick,->] (-1.5,-1) -- node[below]{$\mathcal{E}$} (1.5,-1);
\node at (-2.25,1.25) {\rotatebox{45}{dephasing}};
\node at (2.25,1.25) {\rotatebox{-45}{\begin{tabular}{c}measure\\-and-\\prepare\end{tabular}}};
\node at (0,-1.75) {temporal channel};
\end{tikzpicture}
\caption{A visualization of the decomposition $\mathcal{E}=\mathcal{G}\circ\mathcal{D}$ of the temporal channel in terms of a dephasing channel $\mathcal{D}$ followed by a measure-and-prepare channel $\mathcal{G}$ from Theorem~\ref{thm:splittemporalmap}. To produce these exact images, set 
$\tau=\sum_{\theta}t_{\theta}\rho_{\Alg{A};\theta}\otimes\rho_{\Alg{B};\theta}$, with $\theta\in\{1,\dots,6\}$, $t_{1}=\frac{5}{8}$, $t_{2}=\cdots=t_{6}=\frac{3}{40}$, $\rho_{\Alg{A};1}=\rho_{\Alg{B};1}=|0\>\<0|$, $\rho_{\Alg{A};2}=\rho_{\Alg{B};2}=|1\>\<1|$, $\rho_{\Alg{A};3}=\rho_{\Alg{B};3}=|+\>\<+|$, $\rho_{\Alg{A};4}=\rho_{\Alg{B};4}=|-\>\<-|$, $\rho_{\Alg{A};5}=\rho_{\Alg{B};5}=|\!+i\>\<+i|$, and $\rho_{\Alg{A};6}=\rho_{\Alg{B};6}=|\!-i\>\<-i|$. Our convention here is that $|0\>=(1,0)$, $|1\>=(0,1)$, $|\pm\>=\frac{1}{\sqrt{2}}\big(|0\>\pm|1\>\big)$, and $|\pm i\>=\frac{1}{\sqrt{2}}\big(|0\>\pm i|1\>\big)$. 
The figure on the left is that of the Bloch ball $\Alg{S}$. The figure on top shows the image, $\mathcal{D}(\Alg{S})$, of $\Alg{S}$ under the map $\mathcal{D}$ as well as $\Alg{S}$ for comparison. The figure on the right shows the image of $\Alg{S}$ under $\mathcal{E}$ as well as $\Alg{S}$ and $\mathcal{D}(\Alg{S})$ for comparison (the figure on the right does not show $\mathcal{G}(\Alg{S})$ to avoid cluttering). The measure-and-prepare channel illustrates greater distinguishability in the vertical direction due to the higher weight given by $t_{1}+t_{2}$ as compared with $t_{3}+t_{4}=t_{5}+t_{6}$.
}
\label{fig:EequalsGD}
\end{figure}

An immediate consequence of Theorem~\ref{thm:splittemporalmap} is that $\mathcal{E}$ is entanglement breaking whenever $\tau$ is separable~\cite{horodecki2003ebc}. A visualization of the channels $\mathcal{D}$ and $\mathcal{G}$ of Theorem~\ref{thm:splittemporalmap} for an illustrative example is shown in Figure~\ref{fig:EequalsGD}. We note that just like $\mathcal{E}$ in~\eqref{eq:qconditionalmap} is well-defined in the sense that it does not depend on the choice of an orthonormal basis of eigenvectors for $\rho_{\Alg{A}}$, for example as shown by its alternative expression in~\eqref{eqn:JEfromSylvester}, the map $\mathcal{D}$ in~\eqref{eq:dephasing} is also well-defined, depending only on $\rho_{\Alg{A}}$, but not the particular choice of orthonormal eigenbasis (cf.\ Appendix~\ref{app:Dwelldefined}).    
Moreover, note that~\eqref{eq:GmapX} ensures that $\mathcal{G}$ does not depend on the choice of a convex decomposition of $\tau$. 
Generalizations of Theorem~\ref{thm:splittemporalmap} for when $\rho_{\Alg{A}}\ge0$ are given in Appendix~\ref{sec:nonfaithful}.

The remainder of this section is devoted to making Theorem~\ref{thm:splittemporalmap} more understandable and then proving it. In particular, we will analyze the different components and justify our terminology before providing a proof. To do so, we first recall some definitions involving Hadamard--Schur channels and correlation matrices. 

A \define{correlation matrix} is a hermitian positive semidefinite matrix whose diagonal entries are all equal to $1$~\cite{LiTam1994}. Note that it follows from this definition that the magnitudes of the off-diagonal entries are between $0$ and $1$ (otherwise the subdeterminant associated with that two-dimensional subspace would become negative violating the subdeterminant test for positive semidefinite matrices~\cite{Strang2022}). 
Equivalently, an $n\times n$ matrix $C$ is a correlation matrix whenever there exist $n$ normalized vectors $|v_{i}\>\in\C^{n}$ such that $\<i|C|j\>=\<v_{i}|v_{j}\>$ for all $i,j=1,\dots,n$, i.e., 
\be
\label{eqn:Grammatrix}
C=\sum_{i,j}\<v_{i}|v_{j}\>|i\>\<j|=\begin{bmatrix}\text{\raisebox{5pt}{\uline{\hspace{4mm}}}}\!&\!v_{1}^{\dag}\!&\!\text{\raisebox{5pt}{\uline{\hspace{4mm}}}}\\&\vdots&\\\text{\raisebox{5pt}{\uline{\hspace{4mm}}}}\!&\!v_{n}^{\dag\!}&\!\text{\raisebox{5pt}{\uline{\hspace{4mm}}}}\end{bmatrix}
\begin{bmatrix}|&&|\\v_{1}&\cdots&v_{n}\\|&&|\end{bmatrix}.
\ee
This equivalence follows directly from the Cholesky test for positive semidefinite matrices~\cite{Strang2022}. 
In other words, an $n\times n$ correlation matrix is the \emph{Gram matrix} associated with a set of $n$ normalized vectors. 
In what follows, we will say that a correlation matrix is \define{strict} when it is strictly positive, which implies its off-diagonal entries have magnitude strictly less than $1$ (otherwise the subdeterminant would vanish). Note that a correlation matrix $C$ is strict whenever there exists a normalized \emph{basis} (though not necessarily orthonormal) $\{|v_{i}\>\}$ such that $\<i|C|j\>=\<v_{i}|v_{j}\>$ for all $i,j$. 
Using a correlation matrix with the Hadamard--Schur product yields a \emph{Hadamard--Schur channel}, which is a quantum channel that is given by the Hadamard--Schur product with some matrix with respect to some orthonormal basis~\cite{Watrous18}. The following is the special case when the matrix is a correlation matrix. 

\begin{definition}
Let $\Alg{A}=\matr_{m}$. A \define{generalized dephasing channel} on $\Alg{A}$ is a Hadamard--Schur channel $\Alg{A}\xrightarrow{\mathcal{D}}\Alg{A}$ of the form 
\be
\label{eqn:generalizeddephasing}
\mathcal{D}(A)=C\odot A=
\sum_{i,j}\<i|C|j\>\<i|A|j\> |i\>\<j|,
\ee
where $C$ is some correlation matrix with respect to some orthonormal basis, called the \define{dephasing basis}, $\{|i\>\}$ of $\C^{m}$. 
\end{definition}

We note that a dephasing channel $\mathcal{D}$ as defined in~\eqref{eqn:generalizeddephasing} is in fact completely positive by Proposition 4.17 in Ref.~\cite{Watrous18}. Moreover, $\mathcal{D}$ is also trace-preserving since 
\begin{align}
\Tr\big[\mathcal{D}(A)\big]&=
\sum_{i,j,k}\<i|C|j\>\<i|A|j\>\<k|i\>\<j|k\> \nonumber \\
&=
\sum_{i}\<i|C|i\>\<i|A|i\>=\Tr[A]
\end{align}
for all $A\in\Alg{A}$ because $\<i|C|i\>=1$ for all $i$. Hence, $\mathcal{D}$ is a quantum channel. 

Our definition of a generalized dephasing channel is equivalent to that given in Refs.~\cite{ToWiWi17,DeSh05} and Section~5.2 of Ref.~\cite{WildeQIT16} due to the equivalence between a correlation matrix and its expression as a Gram matrix as in~\eqref{eqn:Grammatrix}. The motivation for this definition of a generalized dephasing channel combines the following observations. First, for any density matrix $\rho_{d}=\sum_{i}p_{i}|i\>\<i|$ that is diagonal in the dephasing basis, we have $\mathcal{D}(\rho_{d})=\rho_{d}$. 
Moreover, if $C$ is a correlation matrix, then for an arbitrary density matrix $\rho$ (possibly with off-diagonal terms), 
$
\mathcal{D}(\rho)=
\sum_{i,j}\<i|C|j\>\<i|\rho|j\>|i\>\<j|.
$
When $C$ is a strict correlation matrix, the off-diagonal terms of $\rho$ decrease in magnitude after the application of $\mathcal{D}$ since $|\<i|C|j\>|<1$ for all $i\ne j$. Hence, by the strictness condition on the correlation matrix, 
\be
\lim_{n\to\infty}(\underbrace{\mathcal{D}\circ\cdots\circ\mathcal{D}}_{\text{$n$ times}})(\rho)=\sum_{i=1}^{m}\<i|\rho|i\>|i\>\<i|
\ee
for all density matrices $\rho$ in $\Alg{A}$. Thus, iterating a generalized dephasing channel causes any input density matrix to converge towards the density matrix obtained from just its diagonal entries, thereby removing coherence terms in $\rho$ with respect to the dephasing basis. 
If $C$ is not strict, then certain subspaces formed by the dephasing basis cause density matrices to retain coherences within those subspaces.

\begin{lemma}
\label{lem:harmonicmean}
Given any nowhere-vanishing probability distribution $\{p_{i}\}$ and an orthonormal basis $\{ |i\>\}$, the matrix 
\be
\label{eq:harmonicmeanmatrix}
H=\sum_{i,j}\frac{2\sqrt{p_i p_j}}{p_i+p_j}|i\>\<j|
\ee
is a correlation matrix. 
\end{lemma}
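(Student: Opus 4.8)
The plan is to verify the three defining properties of a correlation matrix for the matrix $H$ in~\eqref{eq:harmonicmeanmatrix}: that it is hermitian, that its diagonal entries all equal $1$, and that it is positive semidefinite. The first two properties are immediate. Since the entry $H_{ij}=2\sqrt{p_ip_j}/(p_i+p_j)$ is real and symmetric under $i\leftrightarrow j$, the matrix is hermitian; and setting $j=i$ gives $H_{ii}=2p_i/(2p_i)=1$. Note that this uses $p_i>0$, which is harmless in our applications since $\{p_i\}$ will be the spectrum of a faithful marginal $\rho_{\Alg{A}}$; indeed $H$ is not even well-defined on the diagonal if some $p_i=0$, so I would state strict positivity of the $p_i$ as a standing assumption.

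The only substantive point is positive semidefiniteness, which I would obtain by relating $H$ to the Cauchy matrix already handled in Lemma~\ref{lem:invsumprob}. Writing $K:=\sum_{i,j}(p_i+p_j)^{-1}|i\rangle\langle j|$ for the Cauchy matrix (positive by Lemma~\ref{lem:invsumprob}) and $D:=\mathrm{diag}(\sqrt{p_1},\dots,\sqrt{p_m})$, a direct entrywise check gives
\[
H=2\,DKD,
\]
since $(DKD)_{ij}=\sqrt{p_i}\,(p_i+p_j)^{-1}\sqrt{p_j}$. Because $D$ is self-adjoint, $DKD=D^{\dagger}KD$ is a congruence transform of the positive matrix $K$ and is therefore positive, and scaling by $2$ preserves positivity, so $H\ge0$. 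Equivalently, one could write $H=2\,\big(|a\rangle\langle a|\odot K\big)$ with $a:=\sum_i\sqrt{p_i}\,|i\rangle$ and invoke the Schur product theorem (Lemma~\ref{lem:Schurproductthm}), since $|a\rangle\langle a|$ is a rank-one positive matrix.

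Alternatively, and this is the route I would actually prefer since it produces the Gram-matrix presentation of a correlation matrix directly as in~\eqref{eqn:Grammatrix}, I would use the integral identity $(p_i+p_j)^{-1}=\int_0^\infty e^{-t(p_i+p_j)}\,dt$ to write
\[
H_{ij}=\int_0^\infty\big(\sqrt{2p_i}\,e^{-tp_i}\big)\big(\sqrt{2p_j}\,e^{-tp_j}\big)\,dt=\langle w_i,w_j\rangle_{L^2},
\]
where $w_i(t):=\sqrt{2p_i}\,e^{-tp_i}$. This exhibits $H$ as a Gram matrix, hence positive, and moreover $\|w_i\|^2=\int_0^\infty 2p_i\,e^{-2tp_i}\,dt=1$, so the $w_i$ are unit vectors; this simultaneously proves positivity and exhibits the normalized-vector characterization of a correlation matrix in one stroke.

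I do not anticipate a genuine obstacle: the content lies entirely in spotting the congruence $H=2\,DKD$ (or the integral representation), after which positivity is inherited from Lemma~\ref{lem:invsumprob}. The only subtlety worth flagging in the write-up is the strict positivity of the $p_i$, which is needed both for the diagonal entries to equal $1$ and for Lemma~\ref{lem:invsumprob} to apply.
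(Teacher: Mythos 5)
Your proof is correct, and it takes a genuinely different (and more self-contained) route than the paper's. The paper's own proof of Lemma~\ref{lem:harmonicmean} is a two-line remark: it checks that the diagonal entries equal $1$ and then asserts that ``the rest follows from the arithmetic--geometric mean inequality'' with a citation to Bhatia. Note that AM--GM by itself only shows the off-diagonal entries have magnitude at most $1$, which the paper itself points out is a \emph{consequence} of being a correlation matrix, not a sufficient condition for positive semidefiniteness once the dimension exceeds $2$; the real work in the paper is delegated to the reference. Your argument supplies that content explicitly, and does so by reusing the paper's own Lemma~\ref{lem:invsumprob}: the congruence $H = 2DKD$ with $D = \mathrm{diag}(\sqrt{p_1},\dots,\sqrt{p_m})$ (equivalently, the Schur product of $K$ with the rank-one matrix $|a\rangle\langle a|$, $a = \sum_i \sqrt{p_i}\,|i\rangle$, via Lemma~\ref{lem:Schurproductthm}) immediately inherits positivity from the Cauchy matrix. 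This is essentially the same mechanism the paper deploys later in the proof of Theorem~\ref{thm:tempcompSB}, where $\omega\odot\rho_{\Alg{A};\theta}^{T}$ is shown positive, but the paper never routes it back through this lemma. Your integral representation is a further bonus: writing $H_{ij} = \int_0^\infty (\sqrt{2p_i}\,e^{-tp_i})(\sqrt{2p_j}\,e^{-tp_j})\,dt$ exhibits $H$ as the Gram matrix of unit vectors, which proves positivity and realizes the normalized-vector characterization~\eqref{eqn:Grammatrix} in one step, something neither the paper's proof nor your congruence argument does directly. Finally, your flag on strict positivity of the $p_i$ is apt: the diagonal entry $H_{ii}$ is $0/0$ if $p_i = 0$, and the paper indeed assumes $\rho_{\Alg{A}} > 0$ wherever the lemma is invoked, deferring the degenerate case to Appendix~\ref{sec:nonfaithful} where the sums are restricted to indices with $p_i, p_j > 0$.
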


The $ij$ entry $\<i|H|j\>$ of the matrix $H$ in Lemma~\ref{lem:harmonicmean} represents the \emph{harmonic mean} of $p_i$ and $p_j$~\cite{Bh07}.

\begin{proof}[Proof of Lemma~\ref{lem:harmonicmean}]
First note that the diagonal entries of $H$ are all equal to $1$. The rest follows from the arithmetic-geometric mean inequality
\be
\frac{x+y}{2}\ge\sqrt{xy}, 
\ee
which holds for all nonnegative real numbers $x,y$~\cite{Bh07}. 
\end{proof}

Having established these definitions and lemmas, notice that $\mathcal{D}$ from Equation~\eqref{eq:dephasing} is a generalized dephasing channel of the form
\be
\label{eq:gendephaseasSchur}
\mathcal{D}(A) = H \odot A, 
\ee
where $H$ is given by~\eqref{eq:harmonicmeanmatrix} and $A\in\Alg{A}$ is arbitrary.  

The next ingredient in Theorem~\ref{thm:splittemporalmap} is the map $\mathcal{G}$, which represents a pretty good measure-and-prepare channel and appears in the context of quantum state discrimination protocols~\cite{BaeKwek15}, which we discuss later in Section~\ref{sec:statediscrim}. We first recall measure-and-prepare channels. 

Given algebras $\Alg{A}$ and $\Alg{B}$, a \define{measure-and-prepare channel} from $\Alg{A}$ to $\Alg{B}$ is a channel $\Alg{A}\xrightarrow{\mathcal{E}}\Alg{B}$ of the form 
\be
\label{eq:temporalchannelMPform}
    \mathcal{E}(A) = \sum_\theta \Tr[E_{\theta} A] \rho_{\Alg{B};\theta},
\ee
where $\theta$ is an element of a finite index set, $\{E_{\theta}\}$ is a positive operator-valued measure (POVM)~\cite{Kr83,NiCh11}, and $\{\rho_{\Alg{B};\theta}\}$ is a collection of density matrices in $\Alg{B}$. 
Such a channel can be interpreted as taking an input state $\rho$, measuring outcome $\theta$ with probability $\Tr[E_{\theta}\rho]$, and then preparing the state $\rho_{\Alg{B};\theta}$ conditioned on measuring the outcome $\theta$. Note that the Choi matrix of a measure-and-prepare channel is of the form
\be
\label{eqn:ChoiMandP}
\Choi[\mathcal{E}]=\sum_{\theta}E_{\theta}^{T}\otimes \rho_{\Alg{B};\theta},
\ee
where the transpose is computed with respect to the same basis that the Choi matrix is defined in terms of.

By comparing~\eqref{eqn:ChoiMandP} to the Choi matrix of the temporal channel $\mathcal{E}$ in~\eqref{eq:choichannel}, we deduce that 
\be
\big\{E_{\theta}:=t_{\theta}(\omega\odot\rho_{\Alg{A};\theta})\big\}
\ee
defines a POVM, thereby giving an interpretation to $\mathcal{E}$ as a measure-and-prepare channel of the form~\eqref{eq:temporalchannelMPform}. 
However, as we will see soon, we can provide a clearer interpretation of $\mathcal{E}$ by splitting it up into a dephasing channel, as above, followed by a well-known measure-and-prepare channel given by the \emph{pretty good measurement}~\cite{hausladen1994pretty}.
We recall this definition next. 

Given a finite index set $\Theta$ and an algebra $\Alg{A}$, a \define{probabilistic ensemble of states} in $\Alg{A}$ is a collection $\{t_\theta,\rho_{\Alg{A};\theta}\}$, where $\{t_{\theta}\}$ defines a probability distribution and each $\rho_{\Alg{A};\theta}$ is a density matrix in $\Alg{A}$. The \define{average state} associated with this ensemble is $\rho_{\Alg{A}}:=\sum_{\theta}t_{\theta}\rho_{\Alg{A};\theta}$. The \define{pretty good measurement} associated with this ensemble is the POVM $\{G_{\theta}\}$, where 
\be
\label{eq:PGMPOVM}
G_{\theta}:= t_{\theta}\rho_{\Alg{A}}^{-\frac{1}{2}}\rho_{\Alg{A};\theta}\rho_{\Alg{A}}^{-\frac{1}{2}}.
\ee
We still assume $\rho_{\Alg{A}}>0$ here for simplicity (see Appendix~\ref{sec:nonfaithful} for the more general case $\rho_{\Alg{A}}\ge0$). 
Because $\{G_{\theta}\}$ defines a POVM, given a collection $\{\rho_{\Alg{B};\theta}\}$ of density matrices in $\Alg{B}$, the associated measure-and-prepare map 
$\mathcal{G}(A)=\sum_{\theta}\Tr[G_{\theta}A]\rho_{\Alg{B};\theta}$
is indeed a quantum channel. 

We now have all the ingredients needed to prove Theorem~\ref{thm:splittemporalmap}. 

\begin{proof}[Proof of Theorem~\ref{thm:splittemporalmap}]
Since we have already proved that $\mathcal{D}$ and $\mathcal{G}$ from the statement of Theorem~\ref{thm:splittemporalmap} are CPTP maps in the preceeding paragraphs, here we prove that $\mathcal{E}$ is the composite of $\mathcal{D}$ followed by $\mathcal{G}$. Using the measure-and-prepare formula~\eqref{eq:temporalchannelMPform} for $\mathcal{E}$, we obtain 
\begingroup
\allowdisplaybreaks
\begin{align}
    &\mathcal{E}(A) = \sum_\theta \Tr\left[ \sum_{i,j}\frac{2\<j| t_\theta \rho_{\Alg{A};\theta}|i\>}{p_{i}+p_{j}}|j\>\<i| A\right] \rho_{\Alg{B};\theta} \nonumber \\
    &= \sum_\theta \Tr\left[ t_\theta \rho_{\Alg{A};\theta} \sum_{i,j}\frac{2}{p_{i}+p_{j}} |i\>\<i| A |j\>\<j|\right] \rho_{\Alg{B};\theta} \nonumber \\
    &= \sum_\theta \Tr\left[ G_{\theta}\sum_{i,j}\frac{2}{p_{i}+p_{j}} \rho^{\frac{1}{2}}_\Alg{A} |i\>\<i| A |j\>\<j|\rho^{\frac{1}{2}}_\Alg{A}\right]  \rho_{\Alg{B};\theta} \nonumber \\
    &= \sum_\theta \Tr\left[ G_{\theta}\sum_{i,j}\frac{2\sqrt{p_ip_j}}{p_{i}+p_{j}}  |i\>\<i| A |j\>\<j|\right]  \rho_{\Alg{B};\theta}\nonumber \\
    &=(\mathcal{G}\circ\mathcal{D})(A), 
\end{align}
\endgroup
where $\{p_{i}\}$ are the eigenvalues of $\rho_{\Alg{A}}$, the second and third equalities hold by cyclicity of trace, and  $G_{\theta}=t_{\theta}\rho_{\Alg{A}}^{-\frac{1}{2}}\rho_{\Alg{A};\theta}\rho_{\Alg{A}}^{-\frac{1}{2}}$ is as in~\eqref{eq:PGMPOVM}. 
\end{proof}

\br
In the context of Theorem~\ref{thm:splittemporalmap}, if $\tau\in\Alg{A}\otimes\Alg{B}$ is a general state (not necessarily separable), it is still the case that $\mathcal{E}=\mathcal{G}\circ\mathcal{D}$, where $\mathcal{D}$ is given by~\eqref{eq:dephasing} and $\mathcal{G}$ is given by~\eqref{eq:GmapX}. However, $\mathcal{G}$ need not be a quantum channel when $\tau$ is an arbitrary state. Nevertheless, $\mathcal{G}$ is still a positive trace-preserving map (so it need not be \emph{completely} positive). Positivity of $\mathcal{G}$ can be seen by rewriting it as 
\be
\mathcal{G}(A)=\Tr_{\Alg{A}}\Big[\Big(\rho_{\Alg{A}}^{-\frac{1}{2}}A^{\frac{1}{2}}\otimes1_{\Alg{B}}\Big)^{\dag}\tau\Big(\rho_{\Alg{A}}^{-\frac{1}{2}}A^{\frac{1}{2}}\otimes1_{\Alg{B}}\Big)\Big],
\ee 
which is positive because $\Tr_{\Alg{A}}$ is a positive map and because the term inside the square brackets is positive for all positive $A\in\Alg{A}$. 
This extends the results of Ref.~\cite{PaQPL21}, since a natural followup question to that work is if positive conditionals defined using~\eqref{eq:Estarrho} (cf.\ Remark~\ref{rmk:conditionals}) always exist for bipartite states. The answer is yes, because  $\mathcal{E}=\mathcal{G}\circ\mathcal{D}$ is a composite of two positive trace-preserving maps. 
\er

Before concluding this subsection, we briefly discuss temporal compatibility in the other direction. 
For the other temporal order with $\Alg{B}$ having a direct influence on $\Alg{A}$, a similar interpretation holds for the temporal channel $\Alg{B}\xrightarrow{\mathcal{F}}\Alg{A}$ constructed in a way analogous to $\mathcal{E}$. Namely, $\mathcal{F}$ decomposes as  
$
    \mathcal{F} = \mathcal{G}'\circ\mathcal{D}',
$
where $\Alg{B}\xrightarrow{\mathcal{D}'}\Alg{B}$ is a dephasing channel with respect to an orthonormal basis of eigenvectors for $\rho_\Alg{B}$ and $\Alg{B}\xrightarrow{\mathcal{G}'}\Alg{A}$ is the pretty good measure-and-prepare channel defined on $B\in\Alg{B}$ by 
\begin{align}
\mathcal{G}'(B)&=\Tr_{\Alg{B}}\Big[\tau\Big(1_{\Alg{A}}\otimes \big(\rho_{\Alg{B}}^{-\frac{1}{2}}B\rho_{\Alg{B}}^{-\frac{1}{2}}\big)\Big)\Big]\nonumber\\
&=\sum_\theta t_\theta \Tr\Big[\rho^{-\frac{1}{2}}_\Alg{B} \rho_{\Alg{B};\theta} \rho^{-\frac{1}{2}}_\Alg{B}B\Big] \rho_{\Alg{A};\theta}.
\label{eqn:temporalGreverse}
\end{align}

\subsection{Temporal maps and retrodiction}

Interestingly, the two channels $\mathcal{G}$ and $\mathcal{G}'$ from~\eqref{eq:Gmap} and~\eqref{eqn:temporalGreverse}, respectively, are Petz recovery maps of each other with the reduced states of $\tau$ being respective prior states~\cite{Connes74,AcCe82,Pe84,OhPe93,PaBu22}. More precisely, the process $(\mathcal{G}',\rho_{\Alg{B}})\in\mathscr{P}(\Alg{B},\Alg{A})$ is the Petz recovery process of $(\mathcal{G},\rho_{\Alg{A}})\in\mathscr{P}(\Alg{A},\Alg{B})$. This can be shown as follows. 

First note that the Hilbert--Schmidt adjoint $\mathcal{G}^*$ of the channel  $\mathcal{G}(A)=\sum_\theta t_\theta \Tr\big[\rho^{-\frac{1}{2}}_\Alg{A} \rho_{\Alg{A};\theta} \rho^{-\frac{1}{2}}_\Alg{A} A\big] \rho_{\Alg{B};\theta}$ is given on $B\in\Alg{B}$ by 
\be
\label{eqn:HSadjointsG}
\mathcal{G}^*(B)=\sum_\theta t_\theta\Tr[ \rho_{\Alg{B};\theta} B] \rho^{-\frac{1}{2}}_\Alg{A} \rho_{\Alg{A};\theta} \rho^{-\frac{1}{2}}_\Alg{A}.
\ee

Now let $\hat{\mathcal{G}}_{\rho_\Alg{A}}$ be the Petz recovery map of $\mathcal{G}$ with respect to the prior state ${\rho_\Alg{A}}$, i.e., 
\be
\hat{\mathcal{G}}_{\rho_{\Alg{A}}}(B):=\rho_\Alg{A}^{\frac{1}{2}}\mathcal{G}^*\Big(\mathcal{G}(\rho_\Alg{A})^{-\frac{1}{2}}B\mathcal{G}(\rho_\Alg{A})^{-\frac{1}{2}}\Big)\rho_\Alg{A}^{\frac{1}{2}} 
\ee
for all $B\in\Alg{B}$. 
Since $\mathcal{G}(\rho_\Alg{A})=\rho_\Alg{B}$, we have
\begin{align}
    &\hat{\mathcal{G}}_{\rho_\Alg{A}}(B)
    =\rho_\Alg{A}^{\frac{1}{2}}\mathcal{G}^*\Big(\rho_\Alg{B}^{-\frac{1}{2}}B\rho_\Alg{B}^{-\frac{1}{2}}\Big)\rho_\Alg{A}^{\frac{1}{2}}\nonumber \\
    &=\rho_\Alg{A}^{\frac{1}{2}}\left(\sum_\theta t_\theta \Tr\Big[ \rho_{\Alg{B};\theta}\rho_\Alg{B}^{-\frac{1}{2}}B\rho_\Alg{B}^{-\frac{1}{2}}\Big]\rho_\Alg{A}^{-\frac{1}{2}} \rho_{\Alg{A};\theta} \rho_\Alg{A}^{-\frac{1}{2}}\right)\rho_\Alg{A}^{\frac{1}{2}} \nonumber\\
    &=\sum_\theta t_\theta \Tr\Big[\rho_\Alg{B}^{-\frac{1}{2}} \rho_{\Alg{B};\theta}\rho_\Alg{B}^{-\frac{1}{2}}B\Big] \rho_{\Alg{A};\theta} = \mathcal{G}'(B), \label{eq:GhatGprime}
\end{align}
where we have used~\eqref{eqn:HSadjointsG} in the second equality and~\eqref{eqn:temporalGreverse} in the fourth equality. This proves that $\mathcal{G}'$ is the Petz recovery map of $\mathcal{G}$ with respect to the prior state $\rho_{\Alg{A}}$. Similarly, $\mathcal{G}$ is the Petz recovery map of $\mathcal{G}'$ with respect to the prior state $\rho_\Alg{B}$.

Despite this relationship between $\mathcal{G}$ and $\mathcal{G}'$, we note that $\mathcal{E}$ and $\mathcal{F}$ are \emph{not} in general Petz recovery maps of each other. Rather, they are \emph{Bayesian inverses} of each other based on the notion of Bayesian inverse associated with the canonical state over time~\cite{FuPa22a}, which is reviewed in Appendix~\ref{app:generaltemporalcompatibility}. Moreover, a relationship between the two is given by
\be
\label{eq:DFED}
\mathcal{D}\circ\mathcal{F}=\mathcal{D}\circ\mathcal{G}'\circ\mathcal{D}'
=\hat{\mathcal{D}}_{\rho_{\Alg{A}}}\circ\hat{\mathcal{G}}_{\rho_{\Alg{A}}}\circ\mathcal{D}'
=\hat{\mathcal{E}}_{\rho_{\Alg{A}}}\circ\mathcal{D}',
\ee
where the first equality follows from Theorem~\ref{thm:splittemporalmap} for $\mathcal{F}$, the second equality follows from~\eqref{eq:GhatGprime} and the fact that the Petz recovery map of the generalized dephasing channel is itself,  and the third equality follows from Theorem~\ref{thm:splittemporalmap} for $\mathcal{E}$ and functoriality of the Petz recovery map~\cite{PaBu22,Pa24,Ts22b,Wilde15}. Equation~\eqref{eq:DFED} illustrates how the Petz recovery map $\hat{\mathcal{E}}_{\rho_{\Alg{A}}}$ and Bayesian inverse $\mathcal{F}$ of $(\mathcal{E},\rho_{\Alg{A}})$ are related through the action of generalized dephasing channels. 
Therefore, although the Petz recovery map is often considered as the canonical expression extending Bayesian inversion and retrodiction to the quantum setting~\cite{AwBuSc21,BuSc21,KK19,PaBu22,LiWi18,CHPSSW19,LeSp13,AZBS24,CoSp12}, more general Bayesian inverses as defined in Ref.~\cite{FuPa22a} (see also Refs.~\cite{Ts22,Ts22b,SASDS23,PaFu24TSC}) still admit several key characteristics of retrodiction~\cite{BPJ00,ABL64,Wat55,JeOiBr24}.

\subsection{Relation to state discrimination}
\label{sec:statediscrim}

We next contrast the implications of Theorems~\ref{thm:tempcompSB} and~\ref{thm:splittemporalmap} with quantum state discrimination due to the somewhat surprising appearance of the pretty good measurement~\cite{BaeKwek15,hausladen1994pretty}. In brief, our results say that for a separable state, the observed measurement statistics of light-touch observables are compatible with a temporal structure. Our results do not say that quantum states can be prepared in a temporal way, which would otherwise conflict with the results of quantum state discrimination. This section will make these statements more precise. 

A typical way to prepare a separable state in a spatially compatible way is by using a classical random number generator and announcing the result to both Alice and Bob, who may be spatially separated from each other. Upon receiving the outcome of this random classical output, they individually prepare respective quantum states based on the announced random number. Mathematically, this is modeled by a finite set $\Theta$ with probability distribution $t_{\theta}$, which represents the random number generator, together with collections of states $\rho_{\Alg{A};\theta}$ and $\rho_{\Alg{B};\theta}$, which are the states that Alice and Bob prepare, respectively, given the outcome $\theta\in\Theta$ of the random number generator. Diagrammatically, this spatial structure is represented as 
\[
\vcenter{\hbox{%
\begin{tikzpicture}[font=\small]
\node[state] (omega) at (0,0) {\;\;$\tau$\;\;};
\coordinate (A) at (-0.5,1.05);
\coordinate (B) at (0.5,1.05);
\draw (omega) ++(-0.5,0) -- (A);
\draw (omega) ++(0.5,0) -- (B);
\path[scriptstyle]
node at (-0.35,0.85) {$\Alg{A}$}
node at (0.65,0.85) {$\Alg{B}$};
\end{tikzpicture}
}}
\quad=\quad
\vcenter{\hbox{%
\begin{tikzpicture}[font=\small]
\node[state] (p) at (0,-0.2) {$t$};
\node[copier] (c) at (0,0.3) {};
\node[arrow box] (P) at (-0.5,0.95) {$\mathcal{P}$};
\node[arrow box] (Q) at (0.5,0.95) {$\mathcal{Q}$};
\coordinate (A) at (-0.5,1.65);
\coordinate (B) at (0.5,1.65);
\draw[double] (p) to (c);
\draw[double] (c) to[out=165,in=-90] (P);
\draw[double] (c) to[out=15,in=-90] (Q);
\draw (P) -- (A);
\draw (Q) -- (B);
\path[scriptstyle]
node at (-0.35,1.50) {$\Alg{A}$}
node at (0.65,1.50) {$\Alg{B}$}
node at (0.25,0.10) {$\C^{\Theta}$};
\end{tikzpicture}
}}
\]
and visually captures the spatial aspect of the causal structure associated with the expression of a separable state $\tau\in\Alg{A}\otimes\Alg{B}$ as $\tau=\sum_{\theta\in\Theta} t_{\theta}\rho_{\Alg{A};\theta}\otimes\rho_{\Alg{B};\theta}$. In this picture, $\C^{\Theta}\xrightarrow{\mathcal{P}}\Alg{A}$ and $\C^{\Theta}\xrightarrow{\mathcal{Q}}\Alg{B}$ are the CPTP maps associated with the preparations of $\{\rho_{\Alg{A};\theta}\}$ and $\{\rho_{\Alg{B};\theta}\}$, respectively~\cite{FuPa22a}. Also, the trivalent vertex represents the copy map~\cite{Fr20}, which means that the same outcome $\theta$ is sent to both Bob and Alice.

At a glance, Theorem~\ref{thm:tempcompSB} seems to imply that the separable state $\tau$ can be prepared in a temporal way. Namely, such a temporal preparation protocol would involve Alice sending a copy of her state to Bob, who then acts on that state by the temporal channel $\mathcal{E}$ from Theorem~\ref{thm:tempcompSB} to produce the bipartite state $\tau$. More precisely, Alice first prepares two copies of the quantum state $\rho_{\Alg{A};\theta}$ with probability $t_{\theta}$. Instead of announcing the generated number $\theta$ to Bob, she sends one copy of the quantum state $\rho_{\Alg{A};\theta}$ to Bob. Then, Bob applies the measure-and-prepare channel $\mathcal{E}$ to his received (but unknown) state $\rho_{\Alg{A};\theta}$ as an attempt to carry out a state discrimination protocol in order to identify $\theta$ so that he may successfully prepare the corresponding quantum state $\rho_{\Alg{B};\theta}$. Such a structure would be depicted diagrammatically as 
\[
\vcenter{\hbox{%
\begin{tikzpicture}[font=\small]
\node[state] (p) at (0,-0.2) {$t$};
\node[copier] (c) at (0,0.3) {};
\node[arrow box] (P) at (-0.5,0.95) {$\mathcal{P}$};
\node[arrow box] (P2) at (0.5,0.95) {$\mathcal{P}$};
\node[arrow box] (E) at (0.5,2.00) {$\mathcal{E}$};
\coordinate (A) at (-0.5,2.70);
\coordinate (A2) at (0.5,1.65);
\coordinate (B) at (0.5,2.70);
\draw[double] (p) to (c);
\draw[double] (c) to[out=165,in=-90] (P);
\draw[double] (c) to[out=15,in=-90] (P2);
\draw (P) -- (A);
\draw (P2) -- (A2);
\draw (A2) -- (E);
\draw (E) -- (B);
\path[scriptstyle]
node at (-0.35,2.55) {$\Alg{A}$}
node at (0.65,1.50) {$\Alg{A}$}
node at (0.65,2.55) {$\Alg{B}$}
node at (0.25,0.10) {$\C^{\Theta}$};
\end{tikzpicture}
}}
\quad\text{ with }\quad
\vcenter{\hbox{%
\begin{tikzpicture}[font=\small]
\node[arrow box] (E) at (0,0) {$\mathcal{E}$};
\draw (0,-0.75) -- (E);
\draw (E) -- (0,0.75);
\path[scriptstyle]
node at (0.15,-0.65) {$\Alg{A}$}
node at (0.15,0.65) {$\Alg{B}$};
\end{tikzpicture}
}}
=
\vcenter{\hbox{%
\begin{tikzpicture}[font=\small]
\node[arrow box] (M) at (0,-0.55) {};
\node[arrow box] (P) at (0,0.55) {};
\draw (0,-1.30) -- (M);
\draw[double] (M) -- (P);
\draw (P) -- (0,1.30);
\path[scriptstyle]
node at (0.15,-1.15) {$\Alg{A}$}
node at (0.15,1.15) {$\Alg{B}$};
\end{tikzpicture}
}}
\]
because a measure and prepare channel can be viewed as the composite of a quantum-to-classical channel followed by a classical-to-quantum channel~\cite{FuPa22a}. 
Having this equal to the bipartite state $\tau$ would go against the fact that nonorthogonal quantum states cannot be perfectly distinguished (cf. Section 2.2.4 in Ref~\cite{NiCh11}). And since the states $\{\rho_{\Alg{A};\theta}\}$ need not be orthogonal, such a temporal description of $\tau$ is not possible. In fact, there does not exist \emph{any} measure-and-prepare channel $\Alg{A}\xrightarrow{\mathcal{E}'}\Alg{B}$ such that $\mathcal{E}'(\rho_{\Alg{A};\theta})=\rho_{\Alg{B};\theta}$ for all $\theta$ whenever the $\{\rho_{\Alg{A};\theta}\}$ are nonorthogonal states, where we use the notation $\mathcal{E}'$ to avoid conflict with the channel $\mathcal{E}$ from~\eqref{eq:temporalchannelMPform}. For completeness, we include the precise statement and proof in Appendix~\ref{app:statediscrim}.

The meaning, therefore, of the temporal channel $\mathcal{E}$ in Theorem~\ref{thm:tempcompSB} is that it induces the same \emph{measurement statistics} of light-touch observables obtained in a temporal manner by sequential measurement as would be obtained in a spatial manner by joint measurements on the bipartite state. Therefore, Theorem~\ref{thm:tempcompSB} does not conflict with the known results in quantum state discrimination.

\subsection{Entanglement does not imply temporal incompatibility}

In this section, we generalize Theorem~\ref{thm:tempcompSB} by working with \emph{arbitrary} bipartite states $\tau\in\Alg{A}\otimes\Alg{B}$, not necessarily separable states. More precisely, we provide necessary and sufficient conditions for a general state $\tau$ to be temporally compatible. 
In doing so, we will still utilize the map $\mathcal{D}$ from Theorem~\ref{thm:splittemporalmap}. Note that $\mathcal{D}$ only depends on $\rho_{\Alg{A}}=\Tr_{\Alg{B}}[\tau]$ and is therefore a quantum channel whether $\tau$ is separable or entangled. 

\bt
\label{thm:TempCompat}
Let $\tau$ be a density matrix in $\Alg{A}\otimes\Alg{B}$ and set $\rho_{\Alg{A}}=\Tr_{\Alg{B}}[\tau]$. Assume $\rho_{\Alg{A}}$ is a faithful state, i.e., $\rho_{\Alg{A}}>0$. Then there exists a unique CPTP map $\Alg{A}\xrightarrow{\mathcal{E}}\Alg{B}$ such that $\tau=\mathcal{E}\star\rho_{\Alg{A}}$ if and only if the partial transpose of $(\mathcal{D}\otimes\id_{\Alg{B}})(\tau)$ is positive, i.e., 
\be
\label{eqn:decoheredPPT}
\big((\mathcal{T}\circ\mathcal{D})\otimes \id_{\Alg{B}}\big)(\tau) \ge 0, 
\ee
where $\mathcal{T}:\Alg{A}\to\Alg{A}$ denotes the transpose map (with respect to any orthonormal basis) 
and $\mathcal{D}$ denotes the generalized dephasing channel associated with $\rho_{\Alg{A}}$ from Theorem~\ref{thm:splittemporalmap}, specifically~\eqref{eq:dephasing} or~\eqref{eq:gendephaseasSchur}. 
\et

Before proving this theorem, we make some remarks. First, if the partial transpose is defined with respect to the orthonormal basis used for decomposing $\rho_{\Alg{A}}$, then condition~\eqref{eqn:decoheredPPT} can be rewritten as 
\begin{equation}
\label{eqn:temporalPPT}
(\mathcal{D}\otimes\id_{\Alg{B}})(\tau^{T_{\Alg{A}}})\ge0,
\end{equation}
where $\tau^{T_{\Alg{A}}}$ is the partial transpose of $\tau$ on the $\Alg{A}$ factor. This follows from the fact that $\mathcal{T}\circ \mathcal{D}=\mathcal{D}\circ \mathcal{T}$ when $\mathcal{T}$ is defined use the same basis as $\mathcal{D}$. Note also that if $\mathcal{T}':\Alg{B}\to\Alg{B}$ is the transpose map on $\Alg{B}$ (with respect to any basis), then condition~\eqref{eqn:decoheredPPT} is equivalent to $(\mathcal{D}\otimes\mathcal{T}')(\tau)\ge0$. This follows by applying $\mathcal{T}\otimes\mathcal{T}'$ to the expression~\eqref{eqn:decoheredPPT}, since $\mathcal{T}\otimes\mathcal{T}'$ is a positive map. 

Second, a similar statement holds for temporal compatibility in the other direction. Namely, if $\rho_{\Alg{B}}:=\Tr_{\Alg{A}}[\tau]>0$, then there exists a unique CPTP map $\Alg{B}\xrightarrow{\mathcal{F}}\Alg{A}$ such that $\tau=\gamma(\mathcal{F}\star\rho_{\Alg{B}})$ if and only if 
$
{\big((\id_{\Alg{A}}\otimes (\mathcal{T}'\circ\mathcal{D}')\big)(\tau)\ge0},
$
where $\mathcal{D}'$ denotes the generalized dephasing channel analogous to $\mathcal{D}$ but with respect to a dephasing basis of eigenvectors for $\rho_{\Alg{B}}$. 


\bprf[Proof of Theorem~\ref{thm:TempCompat}]
First note that \emph{any} quantum state $\tau\in\Alg{A}\otimes\Alg{B}$ can be decomposed as a mixture of product states (also called factorizable states), that is, 
\be
\label{eq:stateasaffinecomboofsep}
\tau=\sum_{\theta} t_{\theta} \rho_{\Alg{A};\theta} \otimes \rho_{\Alg{B};\theta}
\ee
with the numbers $\{t_{\theta}\}$ defining a \emph{quasiprobability}, i.e.,  $\sum_{\theta} t_\theta=1$ (see Ref.~\cite{sanpera1998local,sperling2009representation} and Theorem 1 of Ref.~\cite{vidal1999robustness}). Using this decomposition together with the proof of Theorem~\ref{thm:tempcompSB} leads us to conclude that the expression in~\eqref{eq:choichannel}
is still the Choi matrix of the HPTP map $\mathcal{E}$ satisfying $\tau=\mathcal{E}\star\rho_{\Alg{A}}$. Therefore, we can relate the Choi matrix of $\mathcal{E}$ to $\tau$ by a similar calculation to the one appearing in the proof of Theorem~\ref{thm:tempcompSB}. Namely,
\begin{align}
    \Choi&[\mathcal{E}]= \sum_{\theta} t_{\theta}\left(\omega\odot\rho_{\Alg{A};\theta}^{T}\right)\otimes\rho_{\Alg{B};\theta} \nonumber\\
    &=\sum_{\theta}t_{\theta}\sum_{i,j}\frac{2\<i|\rho_{\Alg{A};\theta}^{T}|j\>}{p_{i}+p_{j}}|i\>\<j|\otimes\rho_{\Alg{B};\theta}\nonumber \\
    &= \sum_{\theta}t_{\theta}\sum_{i,j}\frac{2\sqrt{p_{i}p_{j}}}{p_{i}+p_{j}}\<i|\rho_{\Alg{A};\theta}^{T}|j\> \left(\rho_{\Alg{A}}^{-\frac{1}{2}}|i\>\<j|\rho_{\Alg{A}}^{-\frac{1}{2}}\right)\otimes\rho_{\Alg{B};\theta} \nonumber \\
    &=\big((\Ad_{\rho_{\Alg{A}}^{-\frac{1}{2}}}\circ\mathcal{D})\otimes\id_{\Alg{B}}\big)\big(\tau^{T_{\Alg{A}}}\big) \label{eqn:ChoiEAdDtau}
\end{align} 
In this calculation, we let the transpose be defined with respect to the chosen orthonormal eigenbasis of $\rho_{\Alg{A}}$. The second equality then follows from~\eqref{eqn:ChoiEseparable}. The third equality follows by multiplying the inside of the sum by $1=\frac{\sqrt{p_{i}p_{j}}}{\sqrt{p_{i}p_{j}}}$ and the fact that $\rho_\Alg{A}^{-\frac{1}{2}} = \sum_i \frac{1}{\sqrt{p_i}} |i\>\<i|$. In the fourth equality, we used the definition of $\mathcal{D}$ and the notation $\Ad_{V}(W):=VWV^{\dag}$ for all square matrices $V$ and $W$. Thus, $\mathcal{E}$ is CP if and only if the expression in~\eqref{eqn:ChoiEAdDtau} is positive by Choi's theorem~\cite{Ch75}. Now, since $\Ad_{\rho_{\Alg{A}}^{\frac{1}{2}}}$ is simultaneously CP and the inverse of $\Ad_{\rho_{\Alg{A}}^{-\frac{1}{2}}}$, applying $\Ad_{\rho_{\Alg{A}}^{\frac{1}{2}}}\otimes\id_{\Alg{B}}$ yields the condition that $\mathcal{E}$ is CP if and only if $(\mathcal{D}\otimes\id_{\Alg{B}})(\tau^{T_{\Alg{A}}})\ge0$, where $T_{\Alg{A}}$ denotes the transpose with respect to the eigenbasis of $\rho_{\Alg{A}}$. Since $\mathcal{T}\circ\mathcal{D}=\mathcal{D}\circ\mathcal{T}$ for such a transpose, this gives~\eqref{eqn:decoheredPPT} when $\mathcal{T}$ is the transpose defined with respect to the orthonormal basis used for $\rho_{\Alg{A}}$. 

Therefore, it remains to show that condition~\eqref{eqn:decoheredPPT} holds with the partial transpose defined with respect to any orthonormal basis. To do this, let $\tilde{\mathcal{T}}:\Alg{A}\to\Alg{A}$ denote another transpose map, defined with respect to some other orthonormal basis. Then $\tilde{\mathcal{T}}\circ\mathcal{T}$ is CP. To see this, we calculate the Choi matrix. First, we use the Choi matrix defined with respect to the basis $\{|i\>\}$ used for $\mathcal{T}$ to arrive at 
\begin{align}
\Choi[\tilde{\mathcal{T}}\circ\mathcal{T}]&=\sum_{i,j}|i\>\<j|\otimes (\tilde{\mathcal{T}}\circ\mathcal{T})\big(|i\>\<j|\big) \nonumber \\
&=\sum_{i,j}|i\>\<j|\otimes \tilde{\mathcal{T}}\big(|j\>\<i|\big) \nonumber \\
&=\sum_{\tilde{i},\tilde{j}}|\tilde{i}\>\<\tilde{j}|\otimes \tilde{\mathcal{T}}\big(|\tilde{j}\>\<\tilde{i}|\big) \nonumber \\
&=\sum_{\tilde{i},\tilde{j}}|\tilde{i}\>\<\tilde{j}|\otimes |\tilde{i}\>\<\tilde{j}|=\tilde{\Choi}[\id_{\Alg{A}}]
\end{align}
In the third equality, we used the fact that the Jamio{\l}kowski matrix is independent of the basis chosen, and we therefore used the orthonormal basis $|\tilde{i}\>$ with respect to which $\tilde{\mathcal{T}}$ is defined. The last equality follows from the fact that the resulting expression is the Choi matrix of the identity channel using the $\{|\tilde{i}\>\}$ basis. Thus, $\tilde{\mathcal{T}}\circ\mathcal{T}$ is CP by Choi's theorem. 
Now, using the fact that $\tilde{\mathcal{T}}\circ\mathcal{T}$ is CP and also using the fact that $\mathcal{T}^{2}=\id_{\Alg{A}}$, we can apply the channel $(\tilde{\mathcal{T}}\circ\mathcal{T})\otimes\id_{\Alg{B}}$ to the positive element $\big((\mathcal{T}\circ\mathcal{D})\otimes\id_{\Alg{B}}\big)(\tau)$ to yield $\big((\tilde{\mathcal{T}}\circ\mathcal{D})\otimes\id_{\Alg{B}}\big)(\tau)\ge0$. 
\eprf

It is worth noting that condition~\eqref{eqn:decoheredPPT} is closely related to the positive partial transpose (PPT) criterion~\cite{choi1982positive,peres1996separability,horodecki1996necessary}. Recall, the \define{PPT criterion} states that if a bipartite density matrix $\tau\in\Alg{A}\otimes\Alg{B}$ is separable, then $(\mathcal{T}\otimes\id_{\Alg{B}})(\tau)\ge0$. More generally, any bipartite density matrix $\tau\in\Alg{A}\otimes\Alg{B}$ satisfying $(\mathcal{T}\otimes\id_{\Alg{B}})(\tau)\ge0$ is called a \define{PPT state}. Note that this condition is independent of the orthonormal basis used to define the transpose, 
and one could instead use a partial transpose on system $\Alg{B}$ in the sense that $(\mathcal{T}\otimes\id_{\Alg{B}})(\tau)\ge0$ is equivalent to $(\id_{\Alg{A}}\otimes\mathcal{T}')(\tau)\ge0$ for the transpose $\mathcal{T}'$ on $\Alg{B}$ with respect to some orthonormal basis. The contrapositive of the PPT criterion therefore reads that if $\tau$ is a bipartite state and $(\mathcal{T}\otimes\id_{\Alg{B}})(\tau)$ is not positive, then $\tau$ is entangled. From this perspective, we can view Theorem~\ref{thm:TempCompat} as a temporal analogue of the PPT criterion valid in all dimensions. Namely, if $\tau$ is a bipartite state, then the partial transpose of $\tau$ need not be a state, but if it is a state after applying a particular decoherence channel on one factor, then $\tau$ is temporally compatible. 

We have the following general fact relating PPT states and temporal compatibility as a consequence of Theorem~\ref{thm:TempCompat}.

\bc
\label{cor:PPTimpliestempcompat}
Let $\tau\in\Alg{A}\otimes\Alg{B}$ be a PPT state, i.e., $\tau$ is a density matrix and $(\mathcal{T}\otimes\id_{\Alg{B}})(\tau)\ge0$, where the transpose map $\mathcal{T}$ is defined with respect to any orthonormal basis on $\Alg{H}_{\Alg{A}}$. Then, $\tau$ is temporally compatible in the sense that there exists a CPTP map $\Alg{A}\xrightarrow{\mathcal{E}}\Alg{B}$ such that $\tau=\mathcal{E}\star\rho_{\Alg{A}}$, where $\rho_{\Alg{A}}=\Tr_{\Alg{B}}[\tau]$.
\ec

\begin{proof}
As stated before Corollary~\ref{cor:PPTimpliestempcompat}, a state is PPT with the transpose defined with respect to one orthonormal basis if and only if it is PPT with respect to any other orthonormal basis. Therefore, we will use an orthonormal eigenbasis of $\rho_{\Alg{A}}$ for the transpose operation on $\Alg{A}$. 
Theorem~\ref{thm:TempCompat}, specifically the alternative expression~\eqref{eqn:temporalPPT}, immediately implies the claim since $\tau^{T_{\Alg{A}}}\ge0$ by assumption and because $\mathcal{D}$ is CP. 
\end{proof}

Since every separable state is PPT, Corollary~\ref{cor:PPTimpliestempcompat} gives an alternative proof of Theorem~\ref{thm:tempcompSB}.

\section{Discussion}
\label{sec:discussion}

In this work, we addressed the problem of determining whether a given bipartite density matrix is compatible with a direct causal influence by a temporal channel. Since inferring or ruling out causal structure depends on the measurements performed, we focused on the case of measurements involving light-touch observables, which include all Pauli observables, and their associated L\"uders instruments. Specifically, in Theorem~\ref{thm:TempCompat}, we gave necessary and sufficient conditions for such a temporal channel to exist. As a special case, in Theorem~\ref{thm:tempcompSB} we showed that separable density matrices always admit such a direct causal influence. The case of more general bipartite hermitian matrices giving rise to physically realizable expectation values is given in Appendix~\ref{app:generaltemporalcompatibility}. 

An interesting aspect of the necessary and sufficient conditions in Theorem~\ref{thm:TempCompat} guaranteeing when a bipartite state also admits a plausible direct causal influence explanation is the appearance of a condition closely related to the PPT criterion combined with a dephasing channel. Moreover, in the case of separable bipartite states, the dephasing channel becomes a crucial component of the temporal channel compatible with the correlations. Namely, the temporal channel is the composite of a dephasing channel combined with a pretty good measure-and-prepare protocol, thus providing the temporal channel with a concrete interpretation. It is worth mentioning that the two temporal channels going between the marginals are \emph{not} Petz recovery maps of each other, but rather, they are \emph{Bayesian inverses} of each other, where the notion of Bayesian inverse is that which appears in Ref.~\cite{FuPa22a}. Equation~\eqref{eq:DFED} reveals that the Petz recovery map and the Bayesian inverse differ by a dephasing channel, which seems consistent with recent observations made in Ref.~\cite{PaFu24TSC} for a different example involving an amplitude-damping channel. This illustrates that the Petz recovery map, although often thought to be a natural candidate of Bayesian inversion and retrodiction~\cite{PaBu22,LiWi18}, is not necessarily the only retrodiction map to use when describing quantum spatiotemporal correlations. 

Our work leads to several natural questions within the general program of determining plausible causal structures compatible with a given set of expectation values. We mention five such questions. (1) Here, we showed what states admit a plausible causal explanation. Dually, what channels induce a positive semidefinite state over time? (2) What is the geometry of the space of bipartite hermitian matrices $\tau\in\Alg{A}\otimes\Alg{B}$, with $\Tr_{\Alg{B}}[\tau]$ and $\Tr_{\Alg{A}}[\tau]$ density matrices, that are temporally compatible? By our work here, we know this space contains the convex space of separable density matrices (in fact, all PPT states), but we also know by earlier work that this space is neither convex nor does it contain all bipartite density matrices~\cite{SNREG23}. (3) To better understand compatible causal structures, it is important to investigate the case of multipartite distributions as well as more general causal structures~\cite{liu2023quantum,wolfe2021qinflation,ABHLS17,FrKl23,Fu23,BaLoOr19,LieFull2024,KAPW24}. (4) Since our results were concerned with the spatiotemporal correlations obtained from observables that have at most two eigenvalues (termed light-touch observables) with their corresponding L\"uders projective measurements, it is important to investigate extensions of our results to arbitrary positive operator-valued measures and quantum instruments~\cite{Kr83,DaLe70,Ozawa1984,BLM1996}. In fact, recent work in Ref.~\cite{LiKw25} suggests that temporal compatibility in such a general measurement scheme can be approached through weak measurement and interferometry.
(5) Since our work has been formulated in the context of finite-dimensional quantum systems, a natural next step would be to analyze how this work extends to infinite-dimensional quantum systems, where measurement must take account of both quantum field theory and general relativity~\cite{FewsterVerch2020quantum}.

\vspace{3mm}
\noindent
\textbf{Acknowledgments.}
We thank Clive Cenxin Aw, Rudrajit Banerjee, Francesco Buscemi, James Fullwood, Mile Gu, Xueyuan Hu, Felix Leditzky, Seok~Hyung Lie, Valerio Scarani, and Elie Wolfe for discussions. We have also benefited from discussions that took place during the Horizons of Quantum Information II conference in Hainan University. 
This project began while AJP was in the Graduate School of Informatics at Nagoya University and was visiting the Department of Mathematics at MIT. 
This work was partially supported by MEXT-JSPS Grant-in-Aid for Transformative Research Areas (A) ``Extreme Universe'', No.\ 21H05183. MS also thanks the National Research Foundation, Singapore through the National Quantum Office, hosted in A*STAR, under its Centre for Quantum Technologies Funding Initiative (S24Q2d0009); and the Ministry of Education, Singapore, under the Tier 2 grant ``Bayesian approach to irreversibility'' (Grant No.~MOE-T2EP50123-0002). Figure~\ref{fig:EequalsGD} was made using Mathematica~\cite{Mathematica2024}. 

\vspace{3mm}
\noindent
\textbf{Conflict of Interest.} 
The authors have no conflicts to disclose.

\vspace{3mm}
\noindent
\textbf{Data Availability Statement.}
Data sharing is not applicable to this article as no new data were created or analyzed in this study.

\appendix

\section{Well-definedness of Dephasing channel}
\label{app:Dwelldefined}

In this appendix, we will prove that $\mathcal{D}$ from~\eqref{eq:dephasing} does not depend on the choice of eigenbasis for $\rho_{\Alg{A}}$. We temporarily change notation and use the index $i$ to represent a \emph{distinct} eigenvalue $p_{i}$ of $\rho_{\Alg{A}}$. Namely, for each eigenvalue $p_{i}$ of $\rho_{\Alg{A}}$, let $\{|i,\alpha_{i}\>\}$ be an orthonormal basis for the eigenspace corresponding to $p_{i}$ with multiplicity $m_{i}$ so that the index $\alpha_{i}$ runs from $1$ to $m_{i}$ and $\rho_{\Alg{A}}=\sum_{i}\sum_{\alpha_i}p_{i}|i,\alpha_{i}\>\<i,\alpha_{i}|$. Then 
\begin{align}
\mathcal{D}(A)&=\sum_{i,j}\sum_{\alpha_{i}=1}^{m_{i}}\sum_{\alpha_{j}=1}^{m_{j}}\frac{2\sqrt{p_{i}p_{j}}}{p_{i}+p_{j}}|i,\alpha_{i}\>\<i,\alpha_{i}|A|j,\alpha_{j}\>\<j,\alpha_{j}| \nonumber \\
&=\sum_{i,j}\frac{2\sqrt{p_{i}p_{j}}}{p_{i}+p_{j}}\sum_{\alpha_{i}=1}^{m_{i}}\sum_{\alpha_{j}=1}^{m_{j}}|i,\alpha_{i}\>\<i,\alpha_{i}|A|j,\alpha_{j}\>\<j,\alpha_{j}| \nonumber \\
&=\sum_{i,j}\frac{2\sqrt{p_{i}p_{j}}}{p_{i}+p_{j}} P_{i}A P_{j},
\label{eqn:dephasingwelldefined}
\end{align}
where now $\rho_{\Alg{A}}=\sum_{i}p_{i}P_{i}$ is a spectral decomposition of $\rho_{\Alg{A}}$ with $P_{i}=\sum_{\alpha_{i}}|i,\alpha_{i}\>\<i,\alpha_{i}|$ the projection operator onto the eigenspace associated with eigenvalue $p_{i}$. Expression~\eqref{eqn:dephasingwelldefined} therefore provides a manifestly basis independent expression for $\mathcal{D}$, which proves that it is well-defined.

\section{Perfect state discrimination of orthogonal mixed states}
\label{app:statediscrim}

In this appendix, we explain the details regarding state discrimination from Section~\ref{sec:statediscrim}. For notation, we set $\delta_{\theta\vartheta}$ to be the Kronecker delta, which is $1$ when $\theta=\vartheta$ and $0$ otherwise, with $\theta,\vartheta\in\Theta$ and $\Theta$ some finite set. We also recall that a collection of states $\{\rho_{\Alg{A};\theta}\}$ in $\Alg{A}$ is \emph{orthogonal} iff $\Tr[\rho_{\Alg{A};\theta}\rho_{\Alg{A};\vartheta}]=0$ whenever $\theta\ne\vartheta$. 

\begin{definition}
\label{defn:perfectlydistinguishable}
A nonempty ensemble $\{\rho_{\Alg{A};\theta},t_{\theta}\}_{\theta\in\Theta}$, consisting of states $\rho_{\Alg{A};\theta}\in\Alg{A}$ and corresponding probabilities $t_{\theta}>0$, is \define{perfectly distinguishable} iff there exists a POVM $\{E'_{\phi}\}_{\phi\in\Phi}$, with $E'_{\phi}\in\Alg{A}$, and a surjective function $f:\Phi\to\Theta$ such that 
\be
\label{eqn:pdist}
\Tr[\rho_{\Alg{A};\theta}E'_{\phi}] \,t_{\theta}=\delta_{\theta f(\phi)}\,\Tr[\rho_{\Alg{A}}E'_{\phi}]
\ee
for all $\theta\in\Theta$ and $\phi\in\Phi$, where $\rho_{\Alg{A}}=\sum_{\theta}t_{\theta}\rho_{\Alg{A};\theta}$ is the average state of the ensemble.
\end{definition}

Condition~\eqref{eqn:pdist} comes from demanding that the conditional probability $\mathbb{P}(\theta\,|\,\phi)$ of deducing state $\rho_{\Alg{A};\theta}$ having measured $E'_{\phi}$ is $1$ if and only if $f(\phi)=\theta$ and is $0$ otherwise, i.e., $\mathbb{P}(\theta\,|\,\phi)=\delta_{\theta f(\phi)}$, so that an occurrence of $1$ for some $\phi$ can be used to unambiguously identify $\theta$ through $\theta=f(\phi)$. Thus, condition~\eqref{eqn:pdist} is just a rearrangement of the classical Bayes' rule
\be
\mathbb{P}(\phi\,|\,\theta)\;\mathbb{P}(\theta)=\mathbb{P}(\theta\,|\,\phi)\;\mathbb{P}(\phi),
\ee
with $\mathbb{P}(\phi\,|\,\theta)=\Tr[\rho_{\Alg{A};\theta}E'_{\phi}]$, $\mathbb{P}(\theta)=t_{\theta}$, and $\mathbb{P}(\phi)=\Tr[\rho_{\Alg{A}}E'_{\phi}]$. 

\begin{proposition}
An ensemble $\{\rho_{\Alg{A};\theta},t_{\theta}\}_{\theta\in\Theta}$ is orthogonal if and only if it is perfectly distinguishable.
\end{proposition}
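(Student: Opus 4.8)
The plan is to prove both implications by reducing everything to support projections, using the elementary fact that for positive semidefinite operators $P,Q\ge 0$ one has $\Tr[PQ]=0$ if and only if $PQ=0$ (equivalently $QP=0$). This holds because $\Tr[PQ]=\Tr[P^{1/2}QP^{1/2}]=\lVert Q^{1/2}P^{1/2}\rVert_{2}^{2}$ is a squared Hilbert--Schmidt norm, so it vanishes exactly when $Q^{1/2}P^{1/2}=0$, and then $QP=Q^{1/2}(Q^{1/2}P^{1/2})P^{1/2}=0$. Applying this to the density matrices $\rho_{\Alg{A};\theta}$ shows that orthogonality of the ensemble is equivalent to mutual orthogonality of their support projections $Q_{\theta}$, i.e.\ $Q_{\theta}Q_{\vartheta}=0$ for $\theta\ne\vartheta$, and in particular $\rho_{\Alg{A};\theta}Q_{\vartheta}=0$ for $\theta\ne\vartheta$. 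This reformulation drives both directions.

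For the direction that orthogonality implies perfect distinguishability, I would take $\Phi=\Theta$ with $f=\id_{\Theta}$ (trivially surjective) and build the POVM from the support projections. Since the $Q_{\theta}$ are mutually orthogonal, $\sum_{\theta}Q_{\theta}$ is a projection, so $R:=1_{\Alg{A}}-\sum_{\theta}Q_{\theta}\ge0$; a key observation is that $\rho_{\Alg{A};\theta}R=0$ for every $\theta$, hence $\Tr[\rho_{\Alg{A};\theta}R]=0$ and $\Tr[\rho_{\Alg{A}}R]=0$. I would then fix one outcome $\theta_{0}$ and set $E'_{\theta_{0}}:=Q_{\theta_{0}}+R$ and $E'_{\vartheta}:=Q_{\vartheta}$ for $\vartheta\ne\theta_{0}$, which sum to $1_{\Alg{A}}$ and so form a POVM. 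Checking condition~\eqref{eqn:pdist} is then a short computation: for $\theta\ne\vartheta$ the identity $\rho_{\Alg{A};\theta}Q_{\vartheta}=0$ makes the left-hand side vanish, matching $\delta_{\theta\vartheta}=0$; for $\theta=\vartheta$ one uses $\Tr[\rho_{\Alg{A};\vartheta}Q_{\vartheta}]=\Tr[\rho_{\Alg{A};\vartheta}]=1$ and $\Tr[\rho_{\Alg{A}}Q_{\vartheta}]=t_{\vartheta}$ to see both sides equal $t_{\vartheta}$, the extra $R$ contributing nothing because $\Tr[\rho_{\Alg{A}}R]=0$.

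For the converse, assume perfect distinguishability via a POVM $\{E'_{\phi}\}_{\phi\in\Phi}$ and a surjection $f$. I would coarse-grain the POVM by setting $P_{\theta}:=\sum_{\phi:\,f(\phi)=\theta}E'_{\phi}$, so $\{P_{\theta}\}_{\theta\in\Theta}$ is again a POVM with $\sum_{\theta}P_{\theta}=1_{\Alg{A}}$. Summing condition~\eqref{eqn:pdist} over the fibers of $f$ and using $t_{\theta}>0$ gives $\Tr[\rho_{\Alg{A};\theta}P_{\vartheta}]=0$ for $\theta\ne\vartheta$; positivity then yields $\rho_{\Alg{A};\theta}P_{\vartheta}=0$, and summing over $\vartheta\ne\theta$ gives $\rho_{\Alg{A};\theta}(1_{\Alg{A}}-P_{\theta})=0$, whence $\rho_{\Alg{A};\theta}=\rho_{\Alg{A};\theta}P_{\theta}$ and, by taking adjoints, $\rho_{\Alg{A};\theta}=P_{\theta}\rho_{\Alg{A};\theta}$. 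Finally, for $\theta\ne\vartheta$ one computes $\Tr[\rho_{\Alg{A};\theta}\rho_{\Alg{A};\vartheta}]=\Tr[(\rho_{\Alg{A};\theta}P_{\vartheta})\rho_{\Alg{A};\vartheta}]=0$, establishing orthogonality.

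The main obstacle I anticipate is bookkeeping rather than a deep idea: one must correctly reduce the fine-grained condition~\eqref{eqn:pdist} to a statement about the coarse-grained operators $P_{\theta}$ (and, in the forward direction, confirm that the residual $R$ is annihilated by every $\rho_{\Alg{A};\theta}$). The one genuinely substantive ingredient is the passage from $\Tr[PQ]=0$ to the operator identity $PQ=0$, which hinges on positivity of both factors; once that is in hand, each implication is a direct verification.
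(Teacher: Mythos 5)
Your proof is correct and follows essentially the same route as the paper: support projections (plus a residual operator completing the POVM) for the forward direction, and fiber-wise coarse-graining of the POVM combined with the fact that $\Tr[PQ]=0$ forces $PQ=0$ for positive semidefinite $P,Q$ in the converse. The only cosmetic differences are that you absorb the residual projector into one existing outcome rather than adjoining a new outcome as the paper does (and you verify condition~\eqref{eqn:pdist} explicitly, which the paper leaves to the reader), and you conclude orthogonality directly rather than by contradiction.
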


\begin{proof}
For one direction, suppose that $\{\rho_{\Alg{A};\theta}\}$ is orthogonal. For each $\theta$, set $E'_{\theta}$ to be the orthogonal projector onto the support of $\rho_{\Alg{A};\theta}$. If $\sum_{\theta\in\Theta}E'_{\theta}$ is not equal to $1_{\Alg{A}}$, set $E'_{0}:=1_{\Alg{A}}-\sum_{\theta\in\Theta}E'_{\theta}$. Then take $\Phi:=\Theta\coprod\{0\}$, and set $f:\Phi\to\Theta$ to be the function given by $f(\theta)=\theta$ whenever $\theta\in\Theta$ and $f(0)$ to be any element of $\Theta$. Then, this POVM satisfies the conditions of Definition~\ref{defn:perfectlydistinguishable}. 

For the converse direction, suppose $\{\rho_{\Alg{A};\theta}\}$ is perfectly distinguishable via some POVM $\{E'_{\phi}\}$. 
Set 
\begin{equation}
F_{\theta}:=\sum_{\phi\in f^{-1}(\{\theta\})}E'_{\phi}
\end{equation}
so that $\{F_{\theta}\}_{\theta\in\Theta}$ is a POVM and 
\begin{align}
\Tr[\rho_{\Alg{A};\theta}F_{\vartheta}]&=\sum_{\phi\in f^{-1}(\{\vartheta\})}\Tr[\rho_{\Alg{A};\theta}E'_{\phi}] \nonumber \\
&=\sum_{\phi\in f^{-1}(\{\vartheta\})}\frac{1}{t_{\theta}}\delta_{\theta f(\phi)}\Tr[\rho_{\Alg{A}}E'_{\phi}] \nonumber \\
&=0
\end{align}
whenever $\theta\ne\vartheta$.
By using cyclicity of the trace, this gives
\begin{equation}
0=\Tr[\rho_{\Alg{A};\theta}F_{\vartheta}]=\Tr\left[\rho_{\Alg{A};\theta}^{\frac{1}{2}}{F}_{\vartheta}^{\frac{1}{2}}{F}_{\vartheta}^{\frac{1}{2}}\rho_{\Alg{A};\theta}^{\frac{1}{2}}\right],
\end{equation}
which implies ${F}_{\vartheta}^{\frac{1}{2}}\rho_{\Alg{A};\theta}^{\frac{1}{2}}=0$ (since $\Tr[A^{\dag}A]=0$ implies $A=0$), and therefore $F_{\vartheta}\rho_{\Alg{A};\theta}=0$ when $\theta\ne\vartheta$. Similarly, $\rho_{\Alg{A};\theta}F_{\vartheta}=0$ when $\theta\ne\vartheta$. Now, we suppose, to the contrary, that $\theta,\vartheta$ is a distinct pair with $\rho_{\Alg{A};\theta}$ not orthogonal to $\rho_{\Alg{A};\vartheta}$, i.e., $\Tr[\rho_{\Alg{A};\theta}\rho_{\Alg{A};\vartheta}]\ne0$. From this nonorthogonality, it follows that $\Tr[\rho_{\Alg{A};\theta}\rho_{\Alg{A};\vartheta}]$ is strictly positive. On the other hand, we also have
\begin{align}
\Tr[\rho_{\Alg{A};\theta}\rho_{\Alg{A};\vartheta}]&=\sum_{\nu\in\Theta}\Tr[\rho_{\Alg{A};\theta}F_{\nu}\rho_{\Alg{A};\vartheta}] \nonumber \\
&=\Tr[\rho_{\Alg{A};\theta}F_{\vartheta}\rho_{\Alg{A};\vartheta}]=0,
\end{align}
where the first equality follows from $\sum_{\nu}F_{\nu}=1_{\Alg{A}}$ and the latter two equalities follow from $F_{\nu}\rho_{\Alg{A};\vartheta}=0$ when $\nu\ne\vartheta$ and $\rho_{\Alg{A};\theta}F_{\vartheta}=0$ because $\theta\ne\vartheta$. 
This gives a contradiction. 
\end{proof}

\section{Criterion for temporal compatibility}
\label{app:generaltemporalcompatibility}

Theorem~\ref{thm:TempCompat} assumed that $\tau\in\Alg{A}\otimes\Alg{B}$ is a density matrix, which is more in line with the standard classical causal compatibility problem that asks what causal structures are compatible with a given joint probability distribution. However, since there are situations in quantum theory where the outcomes of measurements depend on the order in which they are measured, these expectation values cannot always be represented by density matrices~\cite{FuPa24a,LeSp13}. In the case of pseudo-density matrices, they are nevertheless represented by hermitian operators whose marginals are density matrices. As such, one might wonder if Theorem~\ref{thm:TempCompat} can be generalized to the case where $\tau$ is a hermitian matrix such that its two marginals $\Tr_{\Alg{B}}[\tau]$ and $\Tr_{\Alg{A}}[\tau]$ are states. 

Fortunately, our theorem immediately generalizes to this case, with the same condition~\eqref{eqn:decoheredPPT}. This follows from the fact that every hermitian matrix $\tau\in\Alg{A}\otimes\Alg{B}$ such that $\Tr[\tau]=1$ can also be expressed as~\eqref{eq:stateasaffinecomboofsep}, i.e., an affine mixture of separable states. This fact follows from the result for separable density matrices and the fact that every indefinite trace 1 hermitian matrix can be expressed as an affine mixture of density matrices via
\begin{equation}
\label{eqn:decomptau}
\tau=\Tr[\tau^{+}] \left(\frac{\tau^{+}}{\Tr[\tau^{+}]}\right)-\Tr[\tau^{-}] \left(\frac{\tau^{-}}{\Tr[\tau^{-}]}\right),
\end{equation}
where $\tau=\tau^{+}-\tau^{-}$ is the decomposition of a hermitian matrix as a difference of positive matrices (if $\Tr[\tau^{-}]=0$, then $\tau=\tau^{+}$ is a density matrix). Applying Theorem 1 of Ref.~\cite{vidal1999robustness} to $\frac{\tau^{\pm}}{\Tr[\tau^{\pm}]}$ and using the fact that $1=\Tr[\tau]=\Tr[\tau^{+}]-\Tr[\tau^{-}]$ shows that $\tau$ can be expressed as an affine mixture of separable density matrices. 

This result has two immediate corollaries worth mentioning. The first is that if $\rho_{\Alg{A}\Alg{B}}$ is a bipartite density matrix, then $\tau:=\rho_{\Alg{A}\Alg{B}}^{T_{\Alg{B}}}$ is hermitian and satisfies~\eqref{eqn:decoheredPPT} since $\big((\mathcal{T}\circ\mathcal{D})\otimes\id_{\Alg{B}}\big)(\tau)=\big((\mathcal{T}\otimes\mathcal{T}')\circ(\mathcal{D}\otimes\id_{\Alg{B}})\big)(\rho_{\Alg{A}\Alg{B}})\ge0$. Hence, this provides an alternative proof of the main result of Ref.~\cite{FuLi25}, which states that $\tau=\mathcal{E}\star\rho_{\Alg{A}}$ for some channel $\mathcal{E}$. The second is related to Bayesian inverses defined as follows~\cite{FuPa22a}
(alternative, but equivalent, formulations of such a Bayesian inverse have also appeared in Ref.~\cite{Ts22} as generalized conditional expectations and in Ref.~\cite{SASDS23} from an information geometry perspective). 

\bd
Given a process $(\mathcal{E},\rho_{\Alg{A}})\in\mathscr{P}(\Alg{A},\Alg{B})$, a \define{Bayesian inverse} of $(\mathcal{E},\rho_{\Alg{A}})$ is a CPTP map $\Alg{B}\xrightarrow{\mathcal{E}^{\star}_{\rho}}\Alg{A}$ such that $\mathcal{E}\star\rho_{\Alg{A}}=\gamma(\mathcal{E}^{\star}_{\rho}\star\rho_{\Alg{B}})$, where $\rho_{\Alg{B}}:=\mathcal{E}(\rho_{\Alg{A}})$. 
\ed

The equation $\mathcal{E}\star\rho_{\Alg{A}}=\gamma(\mathcal{E}^{\star}_{\rho}\star\rho_{\Alg{B}})$ in this definition is analogous (and specializes) to the Bayes' rule of classical probability theory, which reads $\mathbb{P}(y|x)\mathbb{P}(x)=\mathbb{P}(x|y)\mathbb{P}(y)$, where the quantities are probabilities and conditional probabilities~\cite{FuPa22a}. 

\bc
A process $(\mathcal{E},\rho_{\Alg{A}})\in\mathscr{P}(\Alg{A},\Alg{B})$ admits a Bayesian inverse if and only if 
\be
\big(\id_{\Alg{A}}\otimes(\mathcal{T}'\circ\mathcal{D}')\big)(\tau)\ge0,
\ee
where $\tau=\mathcal{E}\star\rho_{\Alg{A}}$. 
\ec

\section{Non-faithful marginal states}
\label{sec:nonfaithful}

In Theorem~\ref{thm:tempcompSB}, Proposition~\ref{prop:HPTPconditional}, Theorem~\ref{thm:splittemporalmap}, and related results, it was assumed that $\rho_{\Alg{A}}$ was strictly positive. This assumption can be dropped and similar results can be obtained, which are detailed as follows. 

In Theorem~\ref{thm:tempcompSB}, $\rho_{\Alg{A}}\ge0$ still guarantees the existence of $\mathcal{E}$, though it need not be unique. The proof follows similar steps, except that $\mathcal{E}\big(|i\>\<j|\big)$ is given by~\eqref{eq:qconditionalmap} only for $i,j$ satisfying $p_{i}+p_{j}>0$, i.e., whenever $p_{i}>0$ or $p_{j}>0$. Moreover, note that 
\be
\label{eq:ptrtauij}
\Tr_{\Alg{A}}\Big[\tau\big(|i\>\<j|\otimes1_{\Alg{B}}\big)\Big]=0
\ee
whenever $p_{i}=0$ or $p_{j}=0$. This important fact follows from the following lemma. 

\blem
\label{lem:kerneltensor}
Let $\Alg{A}=\matr_{m}$ and $\Alg{B}=\matr_{n}$ with $m,n$ positive integers. Given a positive semidefinite matrix $\tau\in\Alg{A}\otimes\Alg{B}$, set $\rho_{\Alg{A}}:=\Tr_{\Alg{B}}[\tau]$ and $\rho_{\Alg{B}}:=\Tr_{\Alg{A}}[\tau]$. Then 
\be
\label{eq:kernelinclusion}
\ker(\tau)\supseteq
\ker(\rho_{\Alg{A}})\otimes\C^{n}+\C^{m}\otimes\ker(\rho_{\Alg{B}}),
\ee
where $\ker(\eta)=\big\{v\in\C^{k}\;:\; \eta v=0\big\}$ denotes the kernel of the matrix $\eta\in\matr_{k}$, and $k$ is a positive integer.
\elem

\bprf
Let $|i\>\in\ker(\rho_{\Alg{A}})$ and let $\{|\psi_{l}\>\}$ be an orthonormal basis of vectors in $\C^{n}$, so that the index $l$ runs from $1$ to $n$. Then, since $\tau$ is a positive semidefinite operator, 
\begin{equation}
0\le\big(\<i|\otimes\<\psi_{l}|\big)\tau\big(|i\>\otimes|\psi_{l}\>\big).
\end{equation}
This implies 
\begin{align}
0&\le\sum_{l=1}^{n}\big(\<i|\otimes\<\psi_{l}|\big)\tau\big(|i\>\otimes|\psi_{l}\>\big)\nonumber\\
&=\Tr\left[\tau\left(|i\>\<i|\otimes\sum_{l=1}^{n}|\psi_{l}\>\<\psi_{l}|\right)\right] \nonumber \\
&=\big\<i\big|\Tr_{\Alg{B}}[\tau]\big|i\big\>
=\<i|\rho_{\Alg{A}}|i\>=0. 
\end{align}
Hence, $\big(\<i|\otimes\<\psi_{l}|\big)\tau\big(|i\>\otimes|\psi_{l}\>\big)=0$. Therefore, $\tau\big(|i\>\otimes|\psi_{l}\>\big)=0$, which proves that $|i\>\otimes|\psi_{l}\>\in\ker(\tau)$. This proves that $\ker(\rho_{\Alg{A}})\otimes\C^{n}\subseteq\ker(\tau)$. Similarly, $\C^{m}\otimes\ker(\rho_{\Alg{B}})\subseteq\ker(\tau)$. These two inclusions imply~\eqref{eq:kernelinclusion}. 
\eprf

Lemma~\ref{lem:kerneltensor} implies that $\tau \big(|i\>\<j|\otimes1_{\Alg{B}}\big)=0$ whenever $p_{i}=0$, which implies~\eqref{eq:ptrtauij} whenever $p_{i}=0$. Moreover, due to the cyclicity of the partial trace $\Tr_{\Alg{A}}$ in the first coordinate (i.e., $\Alg{A}$), and the fact that $\tau^{\dag}=\tau$, we also conclude~\eqref{eq:ptrtauij} whenever $p_{j}=0$.  
Therefore, one obtains $\mathcal{E}\big(|i\>\<j|\big)=0$ if $p_{i}=0$ or $p_{j}=0$ and $i\ne j$. 

Meanwhile, one can set $\mathcal{E}\big(|i\>\<i|\big)=\frac{1_{\Alg{B}}}{n}$ if $p_{i}=0$ (other specifications are possible---the one given here is sufficient for our purposes to illustrate existence). Then, $\mathcal{E}$ is defined by linear extension. Namely, putting all this together, $\mathcal{E}$ is explicitly given on $A\in\Alg{A}$ by 
\be
\label{eqn:temporalmapnonfaithful}
\mathcal{E}(A):=\!\sum_{\substack{i,j\\p_{i},p_{j}>0}}\!\frac{2\<i|A|j\>}{p_i+p_j}\Tr_{\Alg{A}}\Big[\tau\big(|i\>\<j|\otimes1_{\Alg{B}}\big)\Big]+\Tr[P^{\perp} A]\frac{1_{\Alg{B}}}{n},
\ee
where $P$ denotes the support projection of $\rho_{\Alg{A}}$ and $P^{\perp}:=1_{\Alg{A}}-P$ is the projection onto its orthogonal complement, i.e., the kernel of $\rho_{\Alg{A}}$. The map $\mathcal{E}$ as given by~\eqref{eqn:temporalmapnonfaithful} provides a generalization of Proposition~\ref{prop:HPTPconditional}, while the map $\mathcal{E}$ is still guaranteed to satisfy the conditions of Theorem~\ref{thm:tempcompSB}, meaning, $\mathcal{E}$ is CPTP whenever $\tau$ is separable.

In Theorem~\ref{thm:splittemporalmap}, the maps $\mathcal{D}$ and $\mathcal{G}$ were defined under the assumption that $\rho_{\Alg{A}}>0$. In the more general setting with $\rho_{\Alg{A}}\ge0$, we set $\mathcal{D}$ to be 
\be
\mathcal{D}(A):=\sum_{\substack{i,j\\p_{i},p_{j}>0}}\frac{2\sqrt{p_{i}p_{j}}}{p_{i}+p_{j}}|i\>\<i|A|j\>\<j|+
\Tr[P^{\perp}A]\frac{1_{\Alg{A}}}{m}
\ee
and $\mathcal{G}$ to be 
\be
\mathcal{G}(A):=\Tr_{\Alg{A}}\Big[\tau\Big(\big(\rho_{\Alg{A}}^{-\frac{1}{2}}A\rho_{\Alg{A}}^{-\frac{1}{2}}\big)\otimes1_{\Alg{B}}\Big)\Big]+\Tr[P^{\perp}A]\frac{1_{\Alg{B}}}{n}.
\ee 
In this last expression, inverses of $\rho_{\Alg{A}}$ are assumed to be Moore--Penrose pseudoinverses~\cite{Mo1920,Pe55,Strang2022}. An alternative expression for $\mathcal{G}$ in terms of the $\{G_{\theta}\}$ from~\eqref{eq:PGMPOVM} (though using pseudoinverses) would be 
\be
\mathcal{G}(A)=\sum_{\theta}\Tr[G_{\theta}A]\rho_{\Alg{B};\theta}+\Tr[P^{\perp} A]\frac{1_{\Alg{B}}}{n}.
\ee
Since the collection $\{G_{\theta}\}$ alone does not define a POVM due to the fact that $\sum_{\theta}G_{\theta}=P$, one adjoins the operator $G_{0}=P^{\perp}$ to get a POVM so that $\mathcal{G}$ still retains the interpretation of a measure-and-prepare channel. 

\br
Mathematically, the fact that $\mathcal{E}\big(|i\>\<j|\big)=0$ if $p_{i}=0$ or $p_{j}=0$ is a consequence of the form of the canonical state over time. If we had used $(\rho\otimes1_{\Alg{B}})\Jamio[\mathcal{E}]$ or $\Jamio[\mathcal{E}](\rho\otimes1_{\Alg{B}})$ instead, then $\mathcal{E}\big(|i\>\<j|\big)$ might not vanish if $p_{i}=0$ or $p_{j}=0$, and we would have to resort to a more subtle analysis along the lines of Refs.~\cite{PaRuBayes,GPRR21,PaRu19}.
\er


\addcontentsline{toc}{section}{\numberline{}Bibliography}
\bibliography{references}

\end{document}